\documentclass[letterpaper]{article} % DO NOT CHANGE THIS

\ifdefined\aaaianonymous
    \usepackage[submission]{aaai2027}
\else
    \usepackage{aaai2027}
\fi
\nocopyright % suppress the submission copyright footnote (anonymity is unaffected)

\usepackage{times}
\usepackage{helvet}
\usepackage{courier}
\usepackage[hyphens]{url}
\usepackage{graphicx}
\usepackage{natbib}
\usepackage{caption}
\usepackage{amsmath}
\usepackage{amssymb}
\usepackage{amsthm}
\usepackage{booktabs}
\usepackage{tikz}
\usetikzlibrary{patterns,positioning,arrows.meta,fit,backgrounds}
\usepackage{colortbl}
\usepackage{pgfplots}\pgfplotsset{compat=1.18}

\theoremstyle{plain}
\newtheorem{theorem}{Theorem}
\newtheorem{proposition}{Proposition}
\newtheorem{lemma}{Lemma}

\theoremstyle{definition}
\newtheorem{definition}{Definition}
\newtheorem{example}{Example}

\newcommand{\Up}[1]{{\uparrow}#1}
\newcommand{\Kstar}{K^{\!*}}
\newcommand{\Dstar}{D^{\!*}}

\newcommand{\AXp}{\textnormal{\textsc{AXp}}}

\let\oldtextsc\textsc
\renewcommand{\textsc}[1]{\textnormal{\oldtextsc{#1}}}

\title{The Complexity of Auditing Disclosure-Robust Defeasible Explanations}

\author{Haoyang Li}
\affiliations{University of Sydney\\ \texttt{ygglc123@gmail.com}}

\begin{document}

\maketitle

\begin{abstract}
A formal explanation certifies a prediction with a subset-minimal sufficient reason. Under \emph{incremental disclosure}, however, evidence arrives field by field, and a normally sufficient reason can be overturned by later information. We study the smallest reason \emph{core that remains sufficient under all admissible later disclosures}; its size is the \emph{robustness radius}. We compile a defeasible classifier into an explicit \emph{boundary atlas} of entry anchors and exit defeaters, and chart the complexity of auditing it (all statements are in the atlas size). Prediction and standing anchors are read by polynomial-time scans of the atlas, without iterative fixpoint computation; a reason's defeater frontier is obtained by scanning and subset-minimizing the defeaters above it. But \emph{verifying} that a reason core is robust is coNP-complete, and \emph{deciding} whether a robust core of size $\le\theta$ exists is $\Sigma_2^p$-complete---a four-cell P\,/\,coNP-c\,/\,NP-c\,/\,$\Sigma_2^p$-c landscape, with the accepted ($A(t){=}1$) case reaching the second level of the polynomial hierarchy. The decision version of minimal certified disclosure is NP-complete; its optimization version is fixed-parameter tractable in the number of excluded worlds without defeaters, with the general-defeater case open. On exact audits of depth-limited decision trees over standard tabular datasets under a deliberately small Boolean abstraction (credit, income, medical) the governing parameters fall in a small-parameter regime (robust cores in the low single digits, $2$--$4$ across datasets and seeds), so exact robust auditing is tractable in these audited cubes; on adversarial instances built from our reductions the hardness bites, with robust cores of size $\Theta(n)$ and a greedy optimality gap that widens with $n$. To our knowledge this is the first $\Sigma_2^p$-complete audit query for disclosure-robust formal explanations.
\end{abstract}

\section{Introduction}

\begin{figure*}[t]
\centering
\begin{tikzpicture}[font=\footnotesize,>={Latex[length=2mm]},
  box/.style={draw,rounded corners=2pt,align=center,inner sep=3.5pt,minimum height=9mm,minimum width=2.75cm,fill=blue!5},
  cheap/.style={box,fill=green!12},
  hard/.style={box,fill=red!12,draw=red!55,very thick},
  sub/.style={font=\scriptsize\itshape,inner sep=1pt}]
  \node[box]   (clf)   at (0,0)       {symbolic classifier\\{\scriptsize tree / rule list}};
  \node[box]   (atlas) at (3.8,0)     {boundary atlas $(K,D)$\\{\scriptsize anchors \& defeaters}};
  \node[cheap] (read)  at (8.7,1.2)   {\textbf{cheap reading}\\[1pt]{\scriptsize prediction, active}\\{\scriptsize reasons, frontier}};
  \node[hard]  (audit) at (8.7,-1.2)  {\textbf{hard optimization}\\[1pt]{\scriptsize verify / min core,}\\{\scriptsize min disclosure}};
  \node[box]   (reg)   at (13.6,-1.2) {exact small-cube audit\\[1pt]{\scriptsize bad worlds $\to$ hitting}\\{\scriptsize set $\to$ radius $\sigma$}};
  \draw[->] (clf)--(atlas);
  \draw[->] (atlas.east) to[out=22,in=180] (read.west);
  \draw[->] (atlas.east) to[out=-22,in=180] (audit.west);
  \draw[->] (audit)--(reg);
  \node[sub] at (8.7,0.30)  {P / scan / depth-3};
  \node[sub] at (8.7,-2.05) {coNP / NP / $\Sigma_2^p$-complete};
\end{tikzpicture}
\caption{The audit pipeline. A symbolic classifier is compiled once into a boundary atlas; \emph{reading} it---prediction, active reasons, the defeater frontier---is cheap, while \emph{optimizing} the audit---verifying a robust core, minimizing it, minimal certified disclosure---climbs to the second level of the polynomial hierarchy. On the deliberately small Boolean cubes we audit, the governing parameter stays small, so exact robust-core search is tractable in that regime.}
\label{fig:pipeline}
\end{figure*}

A formal explanation answers a sharp question---\emph{why} did the classifier decide this?---by naming a subset-minimal set of feature values that force the prediction, whatever the rest \cite{ignatiev2019abduction,marquessilva2023logic}. But it answers at a \emph{frozen} instance over a \emph{fixed} feature space, and audits are rarely frozen. A compliance review uncovers evidence one field at a time, a record is completed step by step, and the governing rules are \emph{defeasible}: a reason that suffices today can be defeated by tomorrow's disclosure and then reinstated the day after. A loan approved on ``stable income, low debt'' is overturned the moment a recent default surfaces, and restored once a guarantor is verified. Against this moving target a single sufficient reason is brittle---it neither survives the next disclosure nor names what would break it.

What an auditor needs is not a sufficient reason but a sufficient \emph{core}: a set of early disclosures whose verdict no \emph{further} admissible disclosure can overturn, and ideally the smallest one, whose size we call the \emph{robustness radius}. We ask how hard that object is to find, to verify, and to certify minimally---and the answer is sharp. Reading a compiled model is cheap, but pinning down its minimal robust core sits at the \emph{second level} of the polynomial hierarchy. The worst case is not the common case, though: on real audited models the parameter that governs exact auditing stays small, so robust auditing is second-level-hard in principle yet tractable in the small-parameter regime these audits occupy. The barrier and the parameter, together, are the message.

To make this precise we compile the classifier into a \emph{boundary atlas} of \emph{entry anchors} (that turn the verdict on) and \emph{exit defeaters} (that turn it off)---the local lower-cover faces capturing the $1{\to}0{\to}1$ life of a defeasible reason. On this compiled object (Figure~\ref{fig:pipeline}) the audit queries split into cheap \emph{reading} and hard \emph{optimization}, the latter a ladder P / coNP-c / NP-c / $\Sigma_2^p$-c topped by the minimal robust core.

Each piece of this is, on its own, familiar. The complexity of explanation \emph{queries} has been mapped across model classes \cite{waldchen2021complexity,barcelo2020interpretability}; knowledge compilation makes online queries cheap once a normal form is precomputed \cite{audemard2020tractable,darwiche2002kcmap}; and sufficient-reason and shortest-implicant problems are known to reach $\Sigma_2^p$ \cite{umans2001shortest}. \emph{What is missing is the defeasible, incremental-disclosure axis}: every one of these results fixes the instance and the feature space, so none asks whether a reason \emph{survives what is disclosed next}, or what certifying that survival costs. We supply exactly this, and find its central query---the minimal robust core---to be $\Sigma_2^p$-complete: to our knowledge the first time disclosure-robustness is placed at the second level of the polynomial hierarchy.

\paragraph{Contributions.}
\begin{itemize}
\itemsep0.15em
\item \emph{A complete audit landscape.} Verifying a robust core is in \textbf{P} (rejected side) and \textbf{coNP-complete} (accepted side); deciding whether a robust core of size $\le\theta$ exists is \textbf{NP-complete} (rejected) and \textbf{$\Sigma_2^p$-complete} (accepted). The accepted-side $\Sigma_2^p$-completeness is the technical heart (Theorems~\ref{thm:landscape}--\ref{thm:sigma2}): the reduction is no shortest-implicant re-encoding---it must keep the audited record accepted, deny a small core the trivial standing anchor, and realize the universal quantifier only through admissible later disclosures.
\item \emph{The hardness is in the optimizing, not the reading.} Prediction, active reasons, and a reason's defeater frontier are cheap atlas queries (constant-depth or a single polynomial scan), so the hardness lives entirely in \emph{minimizing} robust cores and disclosures (Theorem~\ref{thm:cheap}). Minimal certified disclosure has an NP-complete decision version, fixed-parameter tractable without defeaters (Theorem~\ref{thm:disc}).
\item \emph{These are recognizable explanations.} Atlas entries are the positive-literal \AXp{}s in the monotone fragment, and in general a robust core is the minimum sufficient reason stable under disclosure (Proposition~\ref{prop:bridge})---so the hard queries are explanation queries, not abstract combinatorics.
\item \emph{Hard in theory, small in practice.} On exact small-Boolean tabular audits the governing parameter stays in the low single digits over $k=6,\dots,12$, while constructed adversarial instances realize the worst-case blow-up.
\end{itemize}

\section{Preliminaries}

\paragraph{Disclosure model.}
Let $H=\{1,\dots,n\}$ be Boolean \emph{fields}. A \emph{record} $t\subseteq H$ is the set of disclosed (present) fields; $2^H$ is ordered by $\subseteq$ and $\Up k=\{t:k\subseteq t\}$. A classifier is a predicate $A:2^H\to\{0,1\}$ with the null-default axiom $A(\emptyset)=0$. $A$ is \emph{defeasible}: it need not be monotone---disclosing a field can flip the prediction from $1$ to $0$ and a further disclosure can flip it back. We binarize numeric features, the setting of formal XAI for trees and rule lists \cite{izza2022trees}. This is a \emph{one-sided} evidence-atom abstraction: an atom $h$ means a particular piece of evidence \emph{has been disclosed}, and the order is disclosure (set inclusion), not feature value. Signed value-literals such as $x_i{=}0$ and $x_i{=}1$ can be modelled as separate atoms; for such encodings the audit universe $t$ is taken to be consistency-respecting for the case (so $[X,t]$ never contains a record with both $x_i{=}0$ and $x_i{=}1$), or else consistency is enforced by guards compiled into $A$. Our exact experiments use the simpler one-bit-per-feature abstraction, which avoids this issue and keeps the $2^k$ cube, and hence robust-core search, exhaustive (Limitations).

\paragraph{Boundary atlas.}
An \emph{anchored pair} $(K,D)$ of finite families ($K\cap D=\emptyset$, $\emptyset\notin K$) induces
\[
A_{K,D}(t){=}1 \iff \exists k\in K\,(k\subseteq t)\wedge\nexists d\in D\,(k\subseteq d\subseteq t).
\]
We say $k$ is \emph{standing} at $t$ when $k\subseteq t$ and no $d\in D$ has $k\subseteq d\subseteq t$. The \emph{canonical} pair collects the \emph{lower-cover boundary faces} of $A$: $\Kstar=\{k\neq\emptyset:A(k){=}1,\ \exists h\in k\,A(k{\setminus}\{h\}){=}0\}$ (entries) and $\Dstar=\{d:A(d){=}0,\ \exists h\in d\,A(d{\setminus}\{h\}){=}1\}$ (exits)---these are local boundary faces, \emph{not} globally minimal accepting/rejecting records, so they capture the $1{\to}0{\to}1$ re-entries of a non-monotone $A$, and $A_{\Kstar,\Dstar}=A$ for every finite $A$ with $A(\emptyset){=}0$ (representation lemma, appendix). Any explicit $(K,D)$ is evaluated in polynomial time; the atlas is our compiled, queryable object (Figure~\ref{fig:atlas}). Throughout, a complexity statement takes an \emph{explicit} anchored pair as input, measured in $|K|+|D|$; we say \emph{canonical} only where a result needs the lower-cover atlas specifically. The atlases built in our hardness reductions are explicit anchored pairs---valid entries/exits of the predicate they induce, but not assumed canonical---so the bounds concern auditing a given compiled atlas whatever its provenance.

\paragraph{Robust cores.}
Here $t$ is the \emph{audit universe}: the set of fields admissible for eventual disclosure in this audit (the complete eventual record), \emph{not} the currently disclosed partial record. A set $X\subseteq t$ is a \emph{robust core} of $t$ if $A(s)=A(t)$ for every $s\in[X,t]$, i.e.\ the label is invariant under every disclosure admissible in the universe $t$; $X$ is the early-disclosure core and $s$ ranges over all intermediate disclosures $X\subseteq s\subseteq t$ (the maximal universe $t=H$ admits every field). When $A(t)=1$ this is exactly ``disclosing $X$ certifies the positive prediction, and no further admissible disclosure overturns it''; the \emph{robustness radius} is the \emph{size} of a smallest such core, $\sigma(t)=\min\{|X|:X\text{ a robust core of }t\}$. Equivalently, $X$ is a robust core of an accepting $t$ iff it hits the complement $t\setminus s$ of every maximal rejected $s\subseteq t$, so $\sigma(t)$ is a minimum hitting set over the bad worlds---the structure the $\Sigma_2^p$ and FPT results turn on.

\paragraph{Frontiers and audit queries.} For an anchor $k$, its \emph{global} frontier $D_k^{H}=\min_{\subseteq}\{d\in D:k\subseteq d\}$ names the counterfactual disclosures that would defeat the branch somewhere in $H$, while the \emph{universe-relative} frontier $D_k^{t}=\min_{\subseteq}\{d\in D:k\subseteq d\subseteq t\}$ is the part that actually constrains robustness over $[X,t]$---contrastive information versus a robustness bound. The audit queries then take an explicit atlas $(K,D)$ and a record $t$ with the promise $A(t){=}c$: \textsc{verify-core}$_c$ asks whether a given $X\subseteq t$ is a robust core, and \textsc{min-core}$_c$ whether $\sigma(t)\le\theta$, each in an accepted ($c{=}1$) and a rejected ($c{=}0$) version. \emph{All complexity statements are in the explicit atlas size}: compiling a succinct classifier into its full boundary atlas can be exponential, as in knowledge compilation, so the question is which queries stay cheap after compilation and which remain hard \emph{even on the compiled object}.

\begin{example}[Defeasible credit audit]\label{ex:credit}
Over fields \textsf{income\_stable}, \textsf{low\_debt}, \textsf{recent\_default}, \textsf{verified\_guarantor}, the approve rule has entry $k_1=\{\textsf{income\_stable},\textsf{low\_debt}\}$; disclosing a recent default defeats it ($d_1=k_1\cup\{\textsf{recent\_default}\}$, reject), and a verified guarantor reinstates approval ($k_2=d_1\cup\{\textsf{verified\_guarantor}\}$). A positive branch reason for approval is $k_1$; a \emph{signed} classical \AXp{} of a complete instance may additionally encode absent defeaters, but the positive branch reason alone is \emph{brittle} under future disclosure (disclosing \textsf{recent\_default} overturns it). An auditor asking ``which disclosures keep this approval certified?'' needs the disclosure-robust sufficient set together with the defeater frontier; that robust core is the object whose complexity we chart. (This is the four-field reinstatement fragment of the eight-field rule list audited in Section~\ref{sec:exp}, shown here for intuition.)
\end{example}

\begin{figure}[t]
\centering
\begin{tikzpicture}[scale=1.05,font=\footnotesize]
  % up-cone of entry reason k (apex k at bottom, widens upward = more disclosure)
  \fill[blue!6] (0,0) -- (-2.7,3.8) -- (2.7,3.8) -- cycle;
  \draw[thick] (0,0) -- (-2.7,3.8) -- (2.7,3.8) -- cycle;
  \filldraw (0,0) circle (1.6pt) node[below=2pt] {$k$: ``income, low debt''};
  \node at (0,0.55) {$A{=}1$};
  % defeater wedges pushed outward, leaving a clear central channel
  \fill[pattern=north east lines,pattern color=red!50] (-1.35,1.9) -- (-2.4,3.8) -- (-0.6,3.8) -- cycle;
  \fill[pattern=north east lines,pattern color=red!50] (1.35,1.9) -- (0.6,3.8) -- (2.4,3.8) -- cycle;
  \draw (-1.35,1.9) -- (-2.4,3.8) -- (-0.6,3.8) -- cycle;
  \draw (1.35,1.9) -- (0.6,3.8) -- (2.4,3.8) -- cycle;
  \filldraw (-1.35,1.9) circle (1.3pt) node[below left=-1pt] {$d_1$};
  \filldraw (1.35,1.9) circle (1.3pt) node[below right=-1pt] {$d_2$: ``default''};
  \node[red!65] at (-1.5,2.95) {$A{=}0$};
  \node[red!65] at (1.5,2.95) {$A{=}0$};
  % robust core X and its safe interval up the central channel to t
  \filldraw (0,3.8) circle (1.6pt) node[above=2pt] {$t$ (audit universe)};
  \filldraw[green!50!black] (0,1.15) circle (1.6pt) node[left=2pt] {$X$};
  \draw[->,very thick,green!55!black] (0,1.24) -- (0,3.68);
  \node[green!45!black,anchor=west] at (0.16,1.55) {$[X,t]$};
  % disclose-more axis
  \draw[->] (3.15,0.1) -- (3.15,3.7);
  \node[rotate=90] at (3.45,1.9) {disclose more};
\end{tikzpicture}
\caption{A boundary atlas. The triangle is the up-cone $\Up k$ of records extending an entry reason $k$; exit defeaters $d_1,d_2$ carve out the regions where the prediction flips to $0$ (a later ``default'' disclosure defeats the loan reason). A \emph{robust core} $X$ is an early-disclosure set whose whole interval $[X,t]$---every record from $X$ up to the audit universe $t$---keeps the verdict $A(t)$; the green path threads the defeater frontier. Minimizing such an $X$ is what makes auditing hard.}
\label{fig:atlas}
\end{figure}

\section{The Robust-Explanation Landscape}\label{sec:landscape}

\begin{theorem}[Audit complexity]\label{thm:landscape}
For an explicit boundary atlas $(K,D)$, record $t$, and threshold $\theta$:
\begin{center}
\small
\setlength{\arrayrulewidth}{0.4pt}\renewcommand{\arraystretch}{1.25}
\begin{tabular}{@{}l>{\centering\arraybackslash}p{2.05cm}>{\centering\arraybackslash}p{2.05cm}@{}}
\toprule
 & rejected\newline$A(t){=}0$ & accepted\newline$A(t){=}1$ \\
\midrule
\textsc{verify} core & \cellcolor{green!14}\textbf{P} & \cellcolor{orange!22}\textbf{coNP-c} \\
\textsc{min-core} & \cellcolor{orange!30}\textbf{NP-c} & \cellcolor{red!40}\textbf{$\Sigma_2^p$-c}\,$^{\star}$ \\
\bottomrule
\end{tabular}\\[2pt]
{\scriptsize colour $=$ hardness level (light\,$\to$\,dark); $^{\star}$ technical heart.}
\end{center}
\end{theorem}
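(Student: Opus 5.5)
The plan is to treat the four cells separately: membership from the definitions, hardness by explicit reductions that build anchored pairs $(K,D)$ directly. Throughout I use the observation that $A(s)=1$ iff some $k\in K$ is standing at $s$, together with a \emph{shrinking} argument. If $k$ is standing at some $s\in[X,t]$, then $k\subseteq X\cup k\subseteq s$, and every $d$ with $k\subseteq d\subseteq X\cup k$ also satisfies $k\subseteq d\subseteq s$. Hence $k$ is standing at $X\cup k$ as well.

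\emph{Rejected side, verification in P.} When $A(t)=0$, $X$ fails to be robust iff some $s\in[X,t]$ has $A(s)=1$. By shrinking, this happens iff some $k\in K$ with $k\subseteq t$ is standing at $X\cup k$. So the verifier scans each $k\in K$ with $k\subseteq t$ and checks $D$ for a $d$ with $k\subseteq d\subseteq X\cup k$, which is polynomial in $|K|+|D|$.

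\emph{Rejected side, \textsc{min-core}$_0$ NP-complete.} Membership is immediate: guess $X$ with $|X|\le\theta$ and verify it by the scan above. For hardness I will reduce from \textsc{Hitting Set}, with universe $U$ and sets $S_1,\dots,S_m$. The fields are $U\cup\{a_1,\dots,a_m\}$, the anchors are $k_j=\{a_j\}$, the defeaters are $d_{j,u}=\{a_j,u\}$ for $u\in S_j$, and $t=H$. Then $A(\emptyset)=0$ and $A(t)=0$. By the characterization, $X$ is robust iff for every $j$ some $u\in S_j$ lies in $X$; here $a_j$ never helps, since $\{a_j,u\}\subseteq X\cup\{a_j\}$ forces $u\in X$. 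So a robust core of size $\le\theta$ exists iff a hitting set of size $\le\theta$ exists.

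\emph{Accepted side, \textsc{verify-core}$_1$ coNP-complete.} Membership holds because a witness $s\in[X,t]$ with $A(s)=0$ is checked in polynomial time. For hardness I need the existence of a rejected $s\in[X,t]$ to be NP-hard, which requires genuine alternation of anchors and defeaters.

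\begin{itemize}
\item \emph{Core gadget.} Take a base field $b$, the fixed core $X=\{b\}$, and the anchor $\{b\}$.
\item \emph{Entry to rejection.} A "switch" defeater $\{b,c\}$ opens the rejected region.
\item \emph{Forcing constraints.} Reinstating anchors above $\{b,c\}$ encode violated constraints, so $s$ is rejected iff $s\setminus\{b,c\}$ avoids every forbidden pattern.
\item \emph{Forcing size.} Further defeaters above those anchors are needed to force the selection to be \emph{large}, for example one element per block. Otherwise the empty selection is trivially rejected.
\end{itemize}

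I would first try an independent-set or 3-colouring style encoding. Each vertex gets one field per colour. Reinstating anchors fire on monochromatic edges. A second layer (anchor $\{b\}$ made standing again unless every vertex block is selected) enforces completeness. The exact layering to enforce completeness with purely upward anchors is the delicate point of this cell.

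\emph{Accepted side, \textsc{min-core}$_1$ $\Sigma_2^p$-complete.} Membership follows by guessing $X$ with $|X|\le\theta$ and verifying it with the coNP procedure above. Hardness is the main obstacle. I would reduce from $\exists\forall$-\textsc{3SAT}, or from shortest implicant \cite{umans2001shortest}.

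\begin{itemize}
\item \emph{Existential variables.} Each $x_i$ gets two fields $x_i^{+},x_i^{-}$. The threshold $\theta=n_x+1$ forces a small core to contain $b$ plus one field per existential variable, that is, an assignment.
\item \emph{No trivial standing anchor.} The atlas must make any small core that is not of this form non-robust. Blocks left unchosen get a cheap rejecting $s$ via defeaters tied to the missing block.
\item \emph{Universal variables.} These are realized only as free later disclosures inside $t$, so that rejected worlds above $X$ correspond to universal assignments falsifying the matrix. This reuses the coNP gadget, with existential choices in $X$ acting as reinstating anchors that neutralize clauses they satisfy.
\item \emph{Acceptance of the record.} $A(t)=1$ must still hold, for example via an anchor containing both $x_i^{+},x_i^{-}$ that is standing at $t$ but not usable by a size-$\theta$ core.
\end{itemize}

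Verifying that these three requirements (accepted $t$, no cheap trivial core, universal quantifier only through disclosure) coexist is where I expect most of the work.
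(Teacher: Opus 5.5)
Your rejected-side cells are correct and essentially the paper's. The shrinking argument is the pinning lemma, and your Hitting Set gadget is the paper's Vertex Cover gadget with arbitrary set sizes (you need every $S_j\neq\emptyset$, otherwise $\{a_j\}$ is undefeatable and the promise $A(t)=0$ fails). The gap is on the accepted side: neither lower bound is established, and the mechanisms you suggest do not work in this semantics. For \textsc{verify-core}$_1$, defeat is upward-closed: if $k\subseteq d\subseteq s$ then $k$ is defeated at every superset of $s$. So once $\{b,c\}$ is disclosed, $\{b\}$ can never be ``made standing again''. More fundamentally, the condition that must force acceptance, ``some block is still empty'', is downward-closed, so it cannot come from anchors that only fire above $\{b,c\}$ as more is disclosed. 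You need one anchor per block whose field already lies in the core and which is killed by disclosing any element of that block. This is the paper's \textsc{unsat} gadget:
\begin{itemize}
\item core $X=\{v_i\}_i\cup\{c_j\}_j$;
\item anchors $\{v_i\}$ with defeaters $\{v_i,T_i\}$ and $\{v_i,F_i\}$;
\item undefeatable consistency anchors $\{T_i,F_i\}$;
\item clause anchors $\{c_j\}$ with a defeater $\{c_j,\ell\}$ for each satisfying literal field $\ell$;
\item a rescue anchor $\{\varrho\}$ with $\varrho\notin X$, which secures the promise.
\end{itemize}
The rejected records of $[X,H]$ are then exactly $X\cup L$ with $L$ a satisfying assignment. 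Your 3-colouring idea works the same way, with anchors $\{w_v\}$ in the core killed by each $\{w_v,x_v^c\}$ and undefeatable monochromatic-edge anchors. The switch $\{b,c\}$ is unnecessary.

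For \textsc{min-core}$_1$ you name the right tension but not its resolution, and the obstruction is sharper than your plan allows. Any anchor $k$ standing at $t$ with $|k|\le\theta$ is already a robust core, since no $d$ with $k\subseteq d\subseteq t$ exists. So ``reusing the coNP gadget'' brings in its undefeatable consistency anchors $\{x_i^+,x_i^-\}$, and each of these is itself a size-$2$ robust core. Likewise, an acceptance anchor built from the $x_i^{\pm}$ is harmless only if it exceeds the budget. The paper gets past this with ideas absent from your sketch:
\begin{itemize}
\item It reduces from \textsc{mon-dnf-implicant}, where the selectable variables occur only positively. Disclosing extra selectable fields can then only help acceptance, so no consistency anchors are needed; ``one token per index'' is forced inside the source problem by the budget and a mode/address gadget.
\item Universal variables, the rescue anchor $R$ and the guards are size-$q$ blocks with $q=\Theta+1$, so no core within budget completes a block. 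A profile/point-restriction lemma then shows that partial blocks in a core are inert.
\item Force anchors $\{\mu\}\cup G_\mu$ together with guard-poison defeaters $k_r\cup G_\mu$ make every marker $\mu\in M_0$ mandatory in a small robust core. This is what calibrates $\Theta=\theta+m+1$ against the source budget $\theta$.
\item Private labels $\lambda_r$ block cross-term defeat.
\end{itemize}
Both accepted-side hardness cells remain unproved until you give a construction that verifies three things at once: the promise $A(t)=1$, the absence of small robust cores that encode no solution, and the $\forall$-through-disclosure semantics.
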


The four cells are independent results; we sketch each, with the $\Sigma_2^p$-completeness---the technical core---last. Full proofs are in the supplement.

\paragraph{Verify, rejected side (P).}
When $A(t)=0$, $X$ is a robust core iff \emph{every} anchor $k\subseteq t$ is pinned: there is $d\in D$ with $k\subseteq d\subseteq X\cup k$ (so $d\setminus k\subseteq X$). ($\Leftarrow$) any $s\in[X,t]$ containing $k$ contains the pinning $d$, killing the branch, so $A(s)=0$. ($\Rightarrow$) the record $X\cup k$ is rejected, so $k$ is already defeated inside $X\cup k$. The test is an $O(|K||D|n)$ scan.

\paragraph{Min-core, rejected side (NP-complete).}
Membership is clear (guess $X$; verify in P). Hardness is from Vertex Cover: given $G=(V,E)$, take fields $\{a_e\}_{e\in E}\cup\{x_v\}_{v\in V}$, $K=\{\{a_e\}\}$, and $D=\{\{a_e,x_u\},\{a_e,x_v\}\}$ for $e=\{u,v\}$, with $t=H$. Then $A(t)=0$, and by the pinning characterization $X$ is a robust core iff its vertex part is a vertex cover; the anchor fields are never needed. So $\sigma(t)=\tau(G)$.

\paragraph{Verify, accepted side (coNP-complete).}
When $A(t)=1$, $X$ is a robust core iff no record in $[X,t]$ is rejected; a single rejected $s$ refutes it, so the problem is in coNP. Hardness is from \textsc{unsat}: encode a CNF $\varphi$ so that the rejected records in $[X,t]$ are exactly the satisfying assignments of $\varphi$, making $X$ a robust core iff $\varphi$ is unsatisfiable.

\paragraph{Min-core, accepted side ($\Sigma_2^p$-complete).}
Membership: guess $X$ ($\exists$), then ``$X$ is a robust core'' is a coNP check, so the problem is in $\Sigma_2^p$. The lower bound is the heart of the paper.

\begin{theorem}[$\Sigma_2^p$-hardness]\label{thm:sigma2}
\textsc{min-core} on the accepted side is $\Sigma_2^p$-hard.
\end{theorem}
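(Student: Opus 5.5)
The plan is a polynomial reduction from $\exists\forall$-\textsc{sat}, the canonical $\Sigma_2^p$-complete problem. An instance asks whether $\exists x\,\forall y\;\psi(x,y)$ holds, with $\psi$ in 3-DNF, $x=(x_1,\dots,x_n)$ and $y=(y_1,\dots,y_m)$. I will work throughout with the hitting-set reading of robustness stated in the Preliminaries. For an accepted $t$, $X$ is a robust core iff $X$ meets $t\setminus s$ for every maximal rejected $s\subseteq t$. So the job is to write down an explicit atlas $(K,D)$ whose maximal rejected records inside the universe $t=H$ are controlled. Two kinds of bad worlds are wanted: \emph{selection worlds}, which force a small core to pick exactly one literal per existential variable, and \emph{counterexample worlds}, which encode the universal player's assignments $y$ falsifying $\psi$ under the chosen $x$.

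\emph{Fields and selection gadget.} For each $x_i$ I introduce two literal fields $x_i^0,x_i^1$. For each $y_j$ I introduce fields $y_j^0,y_j^1$, possibly padded with enough copies that no core can afford to use them. I set $\theta=n$. I will add anchors and defeaters so that for each $i$ the record $s_i=t\setminus\{x_i^0,x_i^1\}$ is a maximal rejected record. Every robust core must then hit each pair, and a core of size $\le n$ is exactly one literal per pair, i.e.\ an assignment $\alpha$ to $x$. This is also how a small core is denied the trivial standing anchor. Any anchor that would stand throughout $[X,t]$ is placed so that it is defeated in some $s_i$, which costs one field per pair, and $\theta$ leaves no slack for anything else.

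\emph{Counterexample gadget.} I will reuse the encoding from the coNP-hardness of \textsc{verify-core}$_1$, where the rejected records in $[X,t]$ were exactly the satisfying assignments of a CNF. Here the CNF is $\neg\psi(\alpha,y)$, and the dependence on $\alpha$ is through which $x$-literal fields are present. Literal $x_i{=}b$ is read as ``$x_i^b\in s$ and $x_i^{1-b}\notin s$''. The defeaters realizing a clause of $\neg\psi$ must therefore be placed on records that omit the opposite literal of each $x$-literal they test. A record $s\supseteq X$ that re-adds the opposite field becomes inconsistent on pair $i$. Such records must be accepted (via a reinstating anchor $\{x_i^0,x_i^1\}\cup(\cdot)$ placed above the relevant defeaters), so the universal player gains nothing from them.

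The $y$-part uses one-sided atoms in the same way. Any $y$-assignment $\beta$ is realized by the record containing the fields $y_j^{\beta_j}$ and omitting their complements, and inconsistent $y$-records are reinstated. With this in place I will check three claims. First, $A(t)=1$. Second, the maximal rejected records are exactly the $s_i$ together with, for each $(\alpha,\beta)$ with $\neg\psi(\alpha,\beta)$, the record containing the $\alpha$-literals, the $\beta$-literals and nothing contradictory. Third, a selection core $X_\alpha$ misses $t\setminus s$ for a counterexample world iff that world is compatible with $\alpha$. Then $\sigma(t)\le n$ iff some $\alpha$ admits no falsifying $\beta$, iff the $\exists\forall$ formula is true. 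The atlas size is polynomial, since each DNF term contributes $O(1)$ defeaters and each pair $O(1)$ anchors.

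The main obstacle is the interaction between the gadgets. Because $A$ is evaluated on all of $2^H$ and the order is one-sided disclosure, adding fields can both defeat and reinstate. I must make sure of two things. The selection worlds $s_i$ must not accidentally be reinstated by the $y$- or clause-anchors. And the clause defeaters must not produce extra rejected records that would force a core to spend fields on $y$-atoms or on both literals of a pair. I would handle this by giving every anchor a private ``tag'' field, together with the pair fields it depends on, so that the up-cones of distinct gadgets overlap only where intended. I would then verify the maximal-rejected characterization by a case analysis on which pairs are complete, empty, or singleton in $s$. If tags leak, my fallback is to duplicate each $x$-literal field into a block whose size exceeds $\theta$ except for a single designated representative, so that unintended hitting options become too expensive.
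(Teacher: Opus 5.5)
Your hitting-set skeleton is sound. If every $s_i=t\setminus\{x_i^0,x_i^1\}$ is rejected, then with $\theta=n$ a core of size at most $n$ must be some $X_\alpha$, and $X_\alpha$ is robust iff no rejected record lies above it. The gap is that $(K,D)$ is never written down, and the mechanisms you name for your three claims do not work.

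With $\theta=n$ the whole core is spent on literal fields, so the universal player chooses every other field and may omit it. The \textsc{unsat} gadget you want to reuse works only because its markers are inside the core ($X=\{v_i\}\cup\{c_j\}$ there). Its anchors $\{\varrho\}$, $\{v_j\}$, $\{c_r\}$ are singletons on fields outside $X_\alpha$, and $\{T_i,F_i\}$ needs an inconsistent pair.

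Now take $s=X_\alpha$ itself. It lies in $[X_\alpha,t]$ and must be accepted whenever $\forall y\,\psi(\alpha,y)$. But it contains no marker, no tag and no $y$-field. So none of those anchors is contained in it, nor is any privately tagged anchor, and you name no anchor that is. As written, $A(X_\alpha)=0$ for every $\alpha$, so every instance maps to a no-instance.

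The same failure hits every partial record $X_\alpha\cup Y'$ in which some $y_j$ has neither field present. Your second claim ignores such records, and your two-field $y$-encoding creates them. The gadget's remedy for them (the $v_j$) again needs fields inside $X$. Dropping private markers also brings back cross-term defeat. A defeater kills \emph{every} anchor it contains, not only the one it was written for. Once anchors are bare literal sets, a defeater built for one term can kill another term's anchor.

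The selection worlds also collide with your reinstatement rule. $s_i$ contains both literals of every other pair and both fields of every $y_j$. So any reinstating anchor $\{x_{i'}^0,x_{i'}^1\}\cup(\cdot)$ with $(\cdot)$ disjoint from pair $i$ lies inside $s_i$ and must be defeated there. Yet it must still stand on the inconsistent supersets of $X_\alpha$. The obvious anchors that certify ``every pair is hit'' number $2^n$. Raising $\theta$ to make room for markers does not help by itself, since you would then have to force the markers into every small core.

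The paper's construction handles each of these points explicitly:
(i) It reduces from \textsc{mon-dnf-implicant}. There the one-token-per-variable selection is checked by the universal player through the mode/address gadget, and $P$-monotonicity makes extra tokens harmless. So it needs neither selection worlds nor reinstatement.
(ii) Each $y_j$ is a single size-$q$ block read as ``present means true''. This leaves no unassigned state, and negative literals are realized by defeaters $k_r\cup B_j$.
(iii) The markers $e,\lambda_r$ \emph{are} put into the core, with budget $\Theta=\theta+m+1$. Guard blocks, force anchors and guard-poison force them there, and the private labels block cross-term defeat.
(iv) Taking $q>\Theta$ keeps blocks out of small cores, and the rescue block $R$ secures $A(t)=1$ without a small undefeatable anchor.

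A direct reduction along your lines may still be possible. But it needs further gadgets with their own case analysis, and those gadgets are the substance of the proof that is currently missing.
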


We reduce from a monotone implicant problem.

\begin{definition}[\textsc{mon-dnf-implicant}]
Given a DNF $F(P,Y)$ in which the \emph{selectable} variables $P$ occur only positively (the universal variables $Y$ occur freely) and an integer $\theta$, decide whether some $X\subseteq P$, $|X|\le\theta$, satisfies $F$ under \emph{every} extension---i.e.\ $X$ is a positive implicant of $F$ of size $\le\theta$.
\end{definition}

\begin{lemma}\label{lem:mondnf}
\textsc{mon-dnf-implicant} is $\Sigma_2^p$-complete.
\end{lemma}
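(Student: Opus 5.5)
Membership follows the template of Theorem~\ref{thm:landscape}. We guess $X\subseteq P$ with $|X|\le\theta$ existentially. Checking that $X$ is an implicant means checking that every total assignment to $(P\setminus X)\cup Y$ satisfies $F$, with the variables of $X$ set true. This is a coNP check: a falsifying assignment is a polynomial certificate, and a DNF is evaluated in linear time. Hence the problem lies in $\Sigma_2^p$.

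For hardness I would reduce from the shortest-implicant problem for DNF, \textsc{shortest-implicant}: given a DNF $G(Z)$ and $\theta$, decide whether some conjunction of at most $\theta$ literals implies $G$. Umans~\cite{umans2001shortest} shows this is $\Sigma_2^p$-complete. The only mismatch is that the implicant's literals may be negative and the universal variables must be separated from the selectable ones. So for each $z\in Z$ I create two selectable variables $p_z^+$ and $p_z^-$, meaning ``the literal $z$ (resp.\ $\neg z$) is in the implicant'', and I keep a copy $y_z\in Y$ of $z$ as a universal variable.

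I then build
\[
F \;=\; G(y) \;\vee\; \bigvee_{z\in Z}\bigl(p_z^+\wedge \neg y_z\bigr) \;\vee\; \bigvee_{z\in Z}\bigl(p_z^-\wedge y_z\bigr),
\]
where $G(y)$ is $G$ with each $z$ replaced by $y_z$. This is again a DNF, and the $p$'s occur only positively, as required.

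If $X$ selects the literal set $L$, then $F$ holds under every extension iff every $y$ satisfies one of two conditions:
\begin{itemize}
\item $y$ violates some selected literal, in which case the corresponding guard term fires; or
\item $y$ satisfies $G$.
\end{itemize}
The two directions of the equivalence then go as follows.
\begin{itemize}
\item Every $y$ consistent with $L$ satisfies $G$. Hence $L$ is an implicant of $G$ when $L$ is consistent. If $L$ contains both $z$ and $\neg z$, the conjunction is unsatisfiable and $L$ is trivially an implicant.
\item Conversely, given an implicant of size $\le\theta$, selecting the matching $p$'s yields a valid $X$ of the same size.
\end{itemize}
One must also fix the unselected $p$'s to false in the worst case. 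This is automatic, because the $p$'s occur positively: setting them true only helps $F$, so the universal adversary sets them false.

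The expected obstacle is bookkeeping about what ``every extension'' ranges over. In the definition, the unselected $P$-variables are also universally quantified. The argument works only because positivity makes their worst case all-false, so this monotonicity step must be stated explicitly.

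A second subtlety is trivial implicants. If $G$ is a tautology, the empty set is an implicant on both sides, which is consistent. Inconsistent selections $\{p_z^+,p_z^-\}$ correspond to the unsatisfiable (hence implicant) conjunction $z\wedge\neg z$ of size $2$. Umans' problem should therefore be taken in the form where such trivial terms are allowed, or else $\theta\ge2$ with a tautology check handled separately; I would check which variant the cited result uses. If it uses the formulation with only $Z$-variables and arbitrary literals, the map is size-preserving and the reduction is complete.
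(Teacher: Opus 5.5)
The membership half is fine, but the hardness reduction does not go through. \emph{The encoding itself collapses.} For any $z\in Z$, the selection $X=\{p_z^+,p_z^-\}$ is a positive implicant of your $F$ whatever $G$ is: every $y$ has either $y_z=0$, which fires $p_z^+\wedge\neg y_z$, or $y_z=1$, which fires $p_z^-\wedge y_z$. So every image instance with $\theta\ge 2$ is a yes-instance, and neither of your two readings of the source helps.
\begin{itemize}
\item If the source counts only consistent terms (the standard reading), no-instances with $k\ge 2$ are mapped to yes-instances.
\item If, as you suggest, $z\wedge\neg z$ is allowed, the source is trivially true for $k\ge 2$. For $k\le 1$ it is a polynomial disjunction of coNP tests, so it lies in $\mathrm{P}^{\mathrm{NP}}$ and is not $\Sigma_2^p$-hard unless the hierarchy collapses.
\end{itemize}
\emph{The citation does not supply what you need.} Umans proves $\Sigma_2^p$-completeness of shortest implicant for general formulas. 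For DNF input he shows it cannot be $\Sigma_2^p$-complete unless $\Sigma_2^p=\mathrm{NP}[\log^2 n]^{\mathrm{NP}}$. The reason is that a shortest implicant of an $m$-term DNF differs from some term in $O(\log m)$ literals, so few existential bits suffice. You also cannot switch to the formula version, because your $F$ contains $G$ verbatim as a disjunct and would no longer be a DNF.

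What is missing is a mechanism that stops the existential side from selecting both polarities of a variable. One repair is to start from Umans' DNF-complete \emph{core} variant, where the implicant must be a subset of a given consistent implicant $C$ of $G$. Then there is a single token $p_\ell$ per literal $\ell\in C$, and $F=G(y)\vee\bigvee_{\ell\in C}(p_\ell\wedge\neg\ell(y))$ is a correct, size-preserving reduction. Your monotonicity remark correctly handles the unselected tokens there.

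The paper instead reduces directly from $\exists\bar x\,\forall\bar y\,\psi$ with $\psi$ a DNF. It substitutes $p_i^1,p_i^0$ for $x_i,\neg x_i$ inside $\psi$, giving a $P$-monotone $\psi^+$. It then adds a mode bit $M$ and address variables so that at $M=1$ and address $i$ the formula reduces to $p_i^0\vee p_i^1$. This forces at least one token per index, and the budget $\theta=a$ then forces exactly one. That is precisely the anti-cheating step your construction lacks.
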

Membership is immediate. Hardness is from $\exists\bar x\,\forall\bar y\,\psi$ with $\psi$ a DNF: introduce two positive tokens $p_i^0,p_i^1$ per $x_i$, substitute $x_i\mapsto p_i^1,\neg x_i\mapsto p_i^0$ into $\psi$ to get a $P$-monotone $\psi^+$, and add a mode bit and an address gadget so that $F=(\neg M\wedge\psi^+)\vee\bigvee_i(M\wedge\mathrm{addr}_i\wedge p_i^{0})\vee\bigvee_i(M\wedge\mathrm{addr}_i\wedge p_i^{1})\vee\bigvee_{c\notin[a]}(M\wedge\mathrm{addr}_c)$. With $\theta=a$: at $M{=}1$, address $i$, $F$ reduces to $p_i^0\vee p_i^1$, so any implicant must contain one token per index---forcing exactly one each under the budget $a$, i.e.\ a Boolean $x$; then at $M{=}0$ with the unchosen tokens free-to-false, $F$ becomes $\psi(x,\bar y)$, so the implicant condition is $\forall\bar y\,\psi(x,\bar y)$ (cf.\ shortest implicants \cite{umans2001shortest}).

\paragraph{The reduction.}
Let $F=\bigvee_{r=1}^m T_r$, $T_r=\bigwedge P_r^P\wedge\bigwedge P_r^Y\wedge\bigwedge_{y\in N_r^Y}\neg y$, where $P_r^P\subseteq P$, $P_r^Y\subseteq Y$, and $N_r^Y\subseteq Y$ are the selectable-positive, universal-positive, and universal-negative literals of term $T_r$ (drop contradictory terms; clamp $\theta:=\min(\theta,|P|)$ to keep sizes polynomial). Put $\Theta=\theta+m+1$ and $q=\Theta+1$. The obstacle the construction must defeat: a single \emph{undefeatable} anchor inside $t$ would be a constant-size robust core, collapsing the $\forall$ quantifier. We use four kinds of fields---a marker $e$, a private label $\lambda_r$ per term, a field per $p\in P$, and four kinds of size-$q$ \emph{blocks} (a block $B_j$ per $y_j$, a guard block $G_\mu$ per marker $\mu\in M_0:=\{e,\lambda_1,\dots,\lambda_m\}$, and a rescue block $R$)---with $t=H$ and:
\begin{itemize}\itemsep0.1em
\item \textbf{Anchors}: rescue $\rho=R$; force $f_\mu=\{\mu\}\cup G_\mu$ ($\mu\in M_0$); term $k_r=\{e,\lambda_r\}\cup P_r^P\cup\bigcup_{y_j\in P_r^Y}B_j$.
\item \textbf{Defeaters}: negative-literal $d_{r,j}=k_r\cup B_j$ ($y_j\in N_r^Y$); guard-poison $g_{r,\mu}=k_r\cup G_\mu$ for each $\mu\in M_0$ with $\mu\notin k_r$.
\end{itemize}
The promise $A(t)=1$ holds because $\rho=R$ stands at $t=H$ (no defeater contains $R$). Each gadget plays one role (Table~\ref{tab:gadget}).

\begin{table}[t]
\centering\footnotesize
\begin{tabular}{@{}ll@{}}
\toprule
Gadget & Purpose \\
\midrule
marker $e$, labels $\lambda_r$ & encode the existential layer (a term) \\
block $B_j$ (size $q$) & encode a universal $Y$-assignment \\
guard block $G_\mu$ & stop small cores using force anchors \\
rescue block $R$ & force $A(t){=}1$; too big for a small core \\
term anchor $k_r$ & encode DNF term $T_r$ \\
neg.\ defeater $d_{r,j}$ & kill $k_r$ when $\neg y_j$ is violated \\
guard-poison $g_{r,\mu}$ & kill wrong-marker term anchors \\
\bottomrule
\end{tabular}
\caption{Roles of the $\Sigma_2^p$-hardness gadgets.}
\label{tab:gadget}
\end{table}

\begin{lemma}[Semantics]\label{lem:sem}
For $s\subseteq t$ with $M_0\subseteq s$, $R\not\subseteq s$, and $G_\mu\not\subseteq s$ for all $\mu$, let $\beta_s(p)=[p\in s]$ and $\beta_s(y_j)=[B_j\subseteq s]$. Then $A_{K,D}(s)=1\iff F(\beta_s)=1$.
\end{lemma}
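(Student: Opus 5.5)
The plan is to unfold the definition of $A_{K,D}(s)$ and treat the three kinds of anchors one at a time, using the three hypotheses on $s$ to dispose of the rescue and force anchors. What remains are the term anchors, whose standing status should mirror the truth of the terms $T_r$ under $\beta_s$.

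\emph{Rescue and force anchors never stand.} The rescue anchor $\rho=R$ is not contained in $s$, because $R\not\subseteq s$. Each force anchor $f_\mu=\{\mu\}\cup G_\mu$ is not contained in $s$, because $G_\mu\not\subseteq s$. So $A_{K,D}(s)=1$ iff some term anchor $k_r$ is standing at $s$: $k_r\subseteq s$ and no $d\in D$ satisfies $k_r\subseteq d\subseteq s$.

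\emph{Containment $k_r\subseteq s$.} Since $M_0\subseteq s$, the fields $e$ and $\lambda_r$ are always present. Hence $k_r\subseteq s$ iff $P_r^P\subseteq s$ and $B_j\subseteq s$ for every $y_j\in P_r^Y$. Under $\beta_s$ this says exactly that all positive literals of $T_r$ are true.

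\emph{Defeaters above $k_r$.} This is the step I expect to be the main obstacle, because a defeater built for one term might happen to contain a different anchor $k_{r'}$. I would prove a small claim: if $k_{r'}\subseteq d$ for a defeater $d$ of the form $d_{r,j}$ or $g_{r,\mu}$, then either $d$ is one of the defeaters of $r'$ itself, or $d$ is never contained in $s$.

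The label fields do the work. Every defeater $d_{r,j}$ or $g_{r,\mu}$ contains $k_r$, hence $\lambda_r$. Its only other marker comes from $G_\mu$, and that only for $g_{r,\mu}$.
\begin{itemize}
\item Consider $g_{r,\mu}$. It contains $G_\mu$, so $g_{r,\mu}\not\subseteq s$ because $G_\mu\not\subseteq s$. Guard-poisons are therefore harmless.
\item Consider $d_{r,j}=k_r\cup B_j$. Its only label field is $\lambda_r$, since blocks are fresh fields disjoint from the markers. So $k_{r'}\subseteq d_{r,j}$ forces $r'=r$.
\end{itemize}
Therefore the defeaters $d$ with $k_r\subseteq d\subseteq s$ are exactly the $d_{r,j}$ with $y_j\in N_r^Y$ and $B_j\subseteq s$. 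Such a defeater exists iff some negative literal $\neg y_j$ of $T_r$ is false under $\beta_s$.

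\emph{Assembly.} Combining the last two steps, $k_r$ stands at $s$ iff every positive literal of $T_r$ holds and every negative literal holds, i.e.\ iff $T_r(\beta_s)=1$. Taking the disjunction over $r$ gives $A_{K,D}(s)=1\iff F(\beta_s)=1$.

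Two bookkeeping points remain. First, the fields in $k_r$ must be disjoint from any block not named in $P_r^Y$, so that containment translates literal by literal. This holds because blocks, markers, $P$-fields and $R$ are pairwise disjoint. Second, dropping contradictory terms ensures $B_j$ is not already in $k_r$ when $y_j\in N_r^Y$. Without this, $d_{r,j}$ would equal $k_r$, but $K\cap D=\emptyset$ requires them to differ.
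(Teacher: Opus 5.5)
Your proof is correct and follows the paper's argument essentially step by step. You exclude the rescue and force anchors via $R\not\subseteq s$ and $G_\mu\not\subseteq s$, match containment of $k_r$ to the positive literals, dismiss guard-poisons, match the negative-literal defeaters $d_{r,j}$ to violated negative literals, and use the private labels $\lambda_r$ to rule out cross-term defeat. One small wording slip: the guard block $G_\mu$ contributes no marker field, since blocks are disjoint from $M_0$ (as you note a line later). This does not affect the argument, because guard-poisons are dismissed by $G_\mu\not\subseteq s$ anyway.
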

\emph{Proof.} The conditions kill $\rho$ (no $R$) and every $f_\mu$ (no $G_\mu$), leaving only term anchors. $k_r\subseteq s$ iff the positive literals of $T_r$ hold; $k_r$ is defeated by $d_{r,j}\subseteq s$ iff $B_j\subseteq s$ iff the negative literal $\neg y_j$ is violated; guard-poison cannot fire (no full $G_\mu$); and the private labels $\lambda_r$ block any cross-term defeat. So a term anchor stands iff its term is satisfied. \hfill$\square$

\begin{lemma}[Point-restriction]\label{lem:pr}
A robust core of $t=H$ of size $\le\Theta$ exists iff one exists containing only markers and $P$-fields.
\end{lemma}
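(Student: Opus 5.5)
The direction ``$\Leftarrow$'' is immediate, so the work is in ``$\Rightarrow$''. I would take an arbitrary robust core $X$ of $t=H$ with $|X|\le\Theta$ and show that its projection $X'=X\cap(M_0\cup P)$ is again a robust core. Since $|X'|\le|X|\le\Theta$, this gives the claim. The first step is a counting remark. Every block ($B_j$, $G_\mu$, $R$) has size $q=\Theta+1>|X|$. So for each block $Z$ I can fix a \emph{witness} field $w_Z\in Z\setminus X$. In any record that omits $w_Z$, the block $Z$ is not full, and therefore no anchor or defeater containing $Z$ is present.

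Second, I argue by contradiction. Suppose some $s\in[X',H]$ is rejected. I will turn $s$ into a rejected record $s^*\in[X,H]$, contradicting robustness of $X$. The natural choice is
\[
s^*=\bigl(s\cup (X\setminus X')\bigr)\setminus\bigl(\{w_R\}\cup\{w_{G_\mu}:\mu\in M_0\}\cup\{w_{B_j}:B_j\not\subseteq s\}\bigr).
\]
This re-adds the block fields of $X$ and removes witnesses, so that $R$, every $G_\mu$, and every $B_j$ not already full in $s$ fail to be full in $s^*$. The witnesses lie outside $X$, so $s^*\supseteq X$. Also, $s^*$ agrees with $s$ on $M_0\cup P$ and on which $B_j$ are full. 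Consequently every term anchor $k_r$ and every negative-literal defeater $d_{r,j}$ is present in $s^*$ exactly when it is present in $s$. The rescue anchor $\rho$ and all force anchors $f_\mu$ are absent from $s^*$, and so are all guard-poisons $g_{r,\mu}$.

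Third, I would show that $s^*$ is rejected. The anchors of $s^*$ are a subset of the term anchors present in $s$. In $s$, each such $k_r$ was either defeated by some $d_{r,j}$, which persists in $s^*$, or defeated only by a guard-poison $g_{r,\mu}$, which has vanished in $s^*$. This second case is the main obstacle. Deleting the witness of $G_\mu$ may resurrect $k_r$, and it is precisely the case the guard gadget is designed for. To handle it I would first replace $s$ by a \emph{canonical} rejected record: among rejected records in $[X',H]$, choose one that contains no full $G_\mu$, if such a record exists.

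The key sub-lemma is then the following. If $g_{r,\mu}\subseteq s$, then $G_\mu\subseteq s$ and $\mu\in s$ unless $\mu$ has been dropped, so $f_\mu$ would stand at $s$ whenever $\mu\in s$, since no defeater contains a force anchor. Rejection of $s$ therefore forces $\mu\notin s$. I would then pass to $s\setminus G_\mu$ and check, using the private labels $\lambda_r$ exactly as in the proof of Lemma~\ref{lem:sem}, that any $k_r$ revived in this way must contain the missing marker $\mu$ and hence is not present anyway. That last check is where the guard-poison condition $\mu\notin k_r$ enters. Once every $G_\mu$ is non-full, the anchors of $s^*$ are exactly the term anchors of $s$ with the same defeats, so $A(s^*)=A(s)=0$. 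This contradicts the robustness of $X$ and completes the argument.
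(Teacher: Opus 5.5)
Your construction of $s^*$ is correct up to the point you flag. The repair you propose there does not work, so there is a genuine gap.

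\emph{The sub-lemma is false.} The guard-poison $g_{r,\mu}$ is only added when $\mu\notin k_r$. So a term anchor that was killed in $s$ only by $g_{r,\mu}$, and is revived by deleting $G_\mu$ or its witness, is precisely one that does \emph{not} contain $\mu$. The condition $\mu\notin k_r$ works against you, not for you. Concretely, take $\mu=\lambda_\ell\notin s$ and any $r\neq\ell$ with $k_r\subseteq s$ and no $d_{r,j}\subseteq s$. Then $k_r$ is defeated in $s$ only by $g_{r,\lambda_\ell}$, and it stands again in $s\setminus G_{\lambda_\ell}$.

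\emph{The ``canonical'' choice does not rescue you.} A rejected record of $[X',H]$ with no full guard block need not exist. Suppose $X'$ omits $\lambda_\ell$ but contains $e$, $\lambda_r$ and $P_r^P$ for some $r\neq\ell$ whose term $T_r$ has no $Y$-literals. Then every record of $[X',H]$ with no full $G_\mu$ is accepted via $k_r$, while $X'\cup G_{\lambda_\ell}$ is rejected. This is exactly the mechanism the ($\Leftarrow$) direction of the correctness proof exploits. Such an $X'$ cannot be the point-restriction of a robust $X$, but showing that is what you are trying to prove. Your argument offers no valid step for this case.

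\emph{The fix.} The difficulty is self-inflicted: you delete $w_{G_\mu}$ for \emph{every} $\mu$, even when $G_\mu\subseteq s$, which changes which blocks are full. Instead, delete the witness $w_Z$ only for blocks $Z$ with $Z\not\subseteq s$. For a rejected $s$ this automatically includes $R$, since $\rho$ is undefeatable. Then $s^*\supseteq X$, and $s^*$ has the same points and exactly the same full blocks as $s$. Every anchor and defeater in the construction is a union of points and whole blocks. Hence each containment test, each standing test, and therefore the value of $A$ is identical on $s$ and $s^*$. So $A(s^*)=A(s)=0$, contradicting the robustness of $X$, with no case analysis on guards.

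This is the paper's argument. It takes $s^*=(s\cap\Pi)\cup\bigcup_{B\subseteq s}B\cup(X\setminus\Pi)$, discarding all partial block fields of $s$ rather than one witness per block. It then uses the fact that $A_{K,D}(s)$ depends only on the profile of $s$, meaning which points are present and which blocks are fully present.
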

\noindent
This is what makes the $\forall$ quantifier bite: since $|X|\le\Theta<q$, no robust core completes a block, and a block partially present in $X$ is inert (it is full in $s$ only if $s$ contains all $q$ of its fields, which $t=H$ supplies independently of $X$). Hence the reachable profiles from $[X,t]$ depend only on $X$'s markers and $P$-fields---the existential layer---while the universal layer ranges freely over block completions (full proof in the appendix).

\paragraph{Correctness.}
By Lemma~\ref{lem:pr} it suffices to consider robust cores contained in the markers and the $P$-fields. ($\Rightarrow$) If $X_P\subseteq P$, $|X_P|\le\theta$ is a positive implicant of $F$, take $X=M_0\cup X_P$, $|X|\le\Theta$. For $s\in[X,t]$: if $R\subseteq s$ then $\rho$ stands; else if some $G_\mu\subseteq s$ then $\mu\in M_0\subseteq X$ gives $f_\mu\subseteq s$ standing (no defeater contains $f_\mu$, since guard-poison $g_{r,\mu}$ omits $\mu$); else Lemma~\ref{lem:sem} applies and $X_P\subseteq s$ forces $F(\beta_s)=1$. So $A(s)=1$ throughout and $X$ is a robust core. ($\Leftarrow$) Let $X$ be a robust core over points, $|X|\le\Theta$. First $M_0\subseteq X$: if $\mu\notin X$, then $s=X\cup G_\mu$ has no standing anchor (the force anchor $f_\mu$ lacks $\mu$; every other anchor is either missing $\mu$ or guard-poisoned), so $A(s)=0$, contradicting robustness. Then $X_P=X\cap P$ has $|X_P|\le\theta$, and for any $P$-extension $\alpha$ and any $Y$-assignment $\eta$ the record $s=X\cup\{p:\alpha(p)\}\cup\bigcup_{\eta(y_j)}B_j$ satisfies Lemma~\ref{lem:sem} with $\beta_s=(\alpha,\eta)$; robustness gives $F(\alpha,\eta)=1$, so $X_P$ is a positive implicant. \hfill$\square$

\paragraph{What is new.}
That \emph{some} implicant problem is $\Sigma_2^p$-hard is known (shortest implicants \cite{umans2001shortest}). The contribution is that the \emph{robust-core audit query on a defeasible classifier} is $\Sigma_2^p$-hard: a robust core is not an implicant of a fixed formula but a sufficiency object stable under \emph{all upward disclosures}, and it is the anchor/defeater promise structure---keeping $A(t)=1$ while every further disclosure is quantified over---that realizes the second level. The reduction is \emph{representation-preserving}: the hard instance is a valid defeasible boundary atlas (a legitimate anchored pair meeting the accepted-record promise, with no trivial undefeatable small anchor), not an arbitrary propositional formula, so the hardness survives the entry/exit representation rather than being inherited from shortest-implicants in the abstract. The reduction is verified end to end at the \emph{field level} (real size-$q$ blocks, the robust core searched over \emph{all} fields): across $30$ field-level true-brute instances, $6726$ profile-level multi-term instances, and targeted adversarial cases, the brute-force source answer matches the brute-force atlas robust-core answer with $0$ disagreements, sanity-checking the block barrier the argument relies on.

\begin{proposition}[Small-bad-world tractability]\label{prop:fpt}
For an accepting record $t$, let $B_t$ be the family of $\subseteq$-maximal rejected records $s\subseteq t$ and $m=|B_t|$. A minimum robust core of $t$ is exactly a minimum hitting set of $\{t\setminus s:s\in B_t\}$; grouping fields by their incidence pattern over $B_t$ gives at most $2^m$ field types, so a minimum robust core is computable in $2^{O(m)}\mathrm{poly}(n)$ once $B_t$ is given.
\end{proposition}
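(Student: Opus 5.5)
The plan has two parts. First I establish the hitting-set characterization directly from the definition of a robust core. Then I solve the hitting-set instance by collapsing fields into types and running a brute-force search whose cost depends only on $m$.

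\emph{Step 1 (characterization).} Fix an accepting $t$ and $X\subseteq t$. Since $A(t)=1$, $X$ is a robust core iff no $s\in[X,t]$ is rejected, i.e.\ iff every rejected $s\subseteq t$ fails $X\subseteq s$. Equivalently, $X\cap(t\setminus s)\neq\emptyset$ for every rejected $s\subseteq t$.

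Next I reduce to the maximal rejected records. If $s\subseteq s'$ with $s'\in B_t$, then $t\setminus s'\subseteq t\setminus s$, so hitting $t\setminus s'$ implies hitting $t\setminus s$. Every rejected $s\subseteq t$ lies below some member of $B_t$, since the family of rejected subsets of $t$ is finite and nonempty whenever $s$ exists. Hence robust cores are exactly the hitting sets of $\mathcal{C}=\{t\setminus s:s\in B_t\}$, and minimum robust cores are exactly minimum hitting sets. Each $t\setminus s$ is nonempty because $s\neq t$ (as $A(t)=1$), so a hitting set exists; $t$ itself is one.

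\emph{Step 2 (types).} Enumerate $B_t=\{s_1,\dots,s_m\}$. For a field $h\in t$, its type is the vector $\tau(h)\in\{0,1\}^m$ with $\tau(h)_i=[h\in t\setminus s_i]$. Two fields of the same type hit exactly the same sets of $\mathcal{C}$. So in any minimum hitting set we may assume at most one field of each type: a second field of the same type hits nothing new and could be removed, contradicting minimality.

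A minimum hitting set therefore corresponds to a minimum collection of realized types whose union of supports covers $[m]$. There are at most $2^m$ realized types, and computing them costs $O(nm)$.

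\emph{Step 3 (algorithm).} I solve set cover on universe $[m]$ with at most $2^m$ candidate sets by dynamic programming over subsets. Let $\mathrm{OPT}[U]$ be the minimum number of types covering $U\subseteq[m]$. For each $U$, take the lowest uncovered element $i\in U$ and branch over the types $T$ containing $i$, setting
\[
\mathrm{OPT}[U]=1+\min_{T\ni i}\mathrm{OPT}[U\setminus T].
\]
This takes $2^m\cdot 2^m=2^{O(m)}$ time. Alternatively, note that $\sigma(t)\le m$, because one field per set suffices, and enumerate all subsets of at most $m$ types; this gives $(2^m)^m$, which is $2^{O(m^2)}$ and too weak, so I prefer the DP. Adding the $\mathrm{poly}(n)$ cost of typing the fields and of reading $B_t$, the total is $2^{O(m)}\mathrm{poly}(n)$. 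Finally, I map each chosen type back to one representative field.

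The main subtlety is the reduction to maximal rejected worlds in Step 1, which requires checking that rejected subsets of $t$ are indeed dominated by elements of $B_t$. Getting $2^{O(m)}$ rather than $2^{O(m^2)}$ requires using the subset DP, not naive enumeration. Computing $B_t$ itself is explicitly assumed given, so I do not need to address its cost.
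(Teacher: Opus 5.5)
Your proposal is correct and follows essentially the paper's route: robust cores of an accepting $t$ are the hitting sets of the complements of the maximal rejected records, fields collapse into at most $2^m$ incidence types, and a subset DP over the $2^m$ coverage patterns gives the bound. This is the same DP the paper spells out for the defeater-free \textsc{min-disclosure} FPT result, and you also supply two details the paper leaves implicit: that every rejected $s\subseteq t$ lies below some member of $B_t$, and that the type count alone only gives $2^{2^m}$ brute force, so the subset DP is what actually delivers $2^{O(m)}$.
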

\noindent
So $m$, the number of bad worlds, is the parameter that governs accepted-side auditing---the quantity we measure empirically in Section~\ref{sec:exp}. This does not contradict the $\Sigma_2^p$-completeness of Theorem~\ref{thm:sigma2}: the family $B_t$ is \emph{not} part of the atlas input and can itself be exponentially hard to generate; the proposition identifies the parameter measured in our exhaustive audits, not a general FPT algorithm from the raw atlas.

\section{Disclosure and Cheap Queries}

\begin{definition}[\textsc{min-disclosure}]
Given an atlas $(K,D)$, a fixed branch $k\in K$, a finite set of \emph{bad worlds} $W=\{w_1,\dots,w_m\}\subseteq 2^H$, and nonnegative integer field costs $c_h$ (encoded in binary), a disclosure $R\subseteq H$ is \emph{feasible} if it (i) keeps $k$ standing---$k\subseteq R$ and no $d\in D$ has $k\subseteq d\subseteq R$---and (ii) excludes every bad world---$R\not\subseteq w_i$ for all $i$. The \emph{optimization} problem minimizes $\sum_{h\in R}c_h$ over feasible $R$; the \emph{decision} problem asks, given a budget $\beta$, whether a feasible $R$ with $\sum_{h\in R}c_h\le\beta$ exists.
\end{definition}
\begin{theorem}[Minimal certified disclosure]\label{thm:disc}
The decision version of \textsc{min-disclosure} is NP-complete, already with no defeaters (Set Cover); the optimization version is NP-hard. In the defeater-free case the optimization problem is fixed-parameter tractable in the number $m$ of excluded worlds---$2^{O(m)}\mathrm{poly}(n)$, by a subset-DP over the $2^m$ coverage patterns; with general defeaters the natural algorithm is XP ($n^{O(m)}$) and fixed-parameter tractability is open.
\end{theorem}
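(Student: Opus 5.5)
Membership comes first. Given a candidate $R$ with $\sum_{h\in R}c_h\le\beta$, feasibility is checked in polynomial time. For (i), test $k\subseteq R$ and scan $D$ for some $d$ with $k\subseteq d\subseteq R$. For (ii), test $R\not\subseteq w_i$ for each of the $m$ listed worlds. Costs are in binary, but summing at most $n$ of them is polynomial in the input size, so the decision version is in NP.

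For hardness I reduce from Set Cover with $D=\emptyset$. Given a universe $U=\{u_1,\dots,u_m\}$, sets $S_1,\dots,S_\ell$ and a bound $b$, create one field $h_j$ per set and one extra field $h_0$. Take $K=\{\{h_0\}\}$ and $k=\{h_0\}$, with $c_{h_0}=0$ and $c_{h_j}=1$. For each element $u_i$, put the bad world $w_i=\{h_0\}\cup\{h_j:u_i\notin S_j\}$. Since $D=\emptyset$, condition (i) reduces to $h_0\in R$. Condition (ii), $R\not\subseteq w_i$, then holds iff $R$ contains some $h_j$ with $u_i\in S_j$. Hence feasible $R$ of cost $\le b$ correspond exactly to set covers of size $\le b$ (set $\beta=b$). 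The reduction is polynomial, and NP-hardness of the optimization version follows directly, because an optimizer answers the decision question.

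For the defeater-free FPT bound, I use that without defeaters (i) just forces $k\subseteq R$. I then normalize. Every $w_i$ with $k\not\subseteq w_i$ is already excluded by any $R\supseteq k$, so I drop it. For the remaining worlds, $R$ excludes $w_i$ iff $R\setminus k$ contains a field $h\notin w_i$. To each field $h\notin k$, assign its coverage pattern $\pi(h)=\{i:h\notin w_i\}\subseteq[m]$. The task becomes: choose fields whose patterns union to $[m]$ with minimum total cost, plus the fixed cost of $k$. For each of the at most $2^m$ patterns $\pi$, keep only the cheapest field with that pattern; this takes $O(nm)$ preprocessing. Then run the subset DP $\mathrm{OPT}[S]=\min_{\pi}\bigl(\mathrm{OPT}[S\setminus\pi]+c_\pi\bigr)$ over $S\subseteq[m]$ with $\mathrm{OPT}[\emptyset]=0$. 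This takes $2^m\cdot 2^m=2^{O(m)}$ steps, with binary-cost arithmetic polynomial per step. Restricting to one field per pattern is safe because an optimal solution never needs two fields of the same pattern (costs are nonnegative), and swapping in the cheapest field of a pattern never increases cost.

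The step I expect to be subtle is the general-defeater case. I would only claim the XP bound and present it as the natural algorithm. A minimal feasible $R$ is $k$ plus at most $m$ witness fields, one per excluded world. So one can enumerate all $\binom{n}{\le m}=n^{O(m)}$ choices of witness sets $Q$, check standing of $k$ in $R=k\cup Q$ by the scan above, and return the cheapest. The correctness point is that adding fields can only create defeaters $d\subseteq R$, never remove them. Therefore shrinking a feasible $R$ to $k$ plus one witness per world preserves (i), and (ii) is preserved by the witnesses. I would then state explicitly that the $2^m$ pattern compression breaks down here, since two fields of the same pattern can differ in which defeaters they complete, and leave FPT open as the theorem states.
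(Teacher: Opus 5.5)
Your proposal is correct and follows the paper's proof. The paper likewise uses defeater-free Set Cover for NP-hardness, and for the FPT bound it drops the worlds already excluded by $k$, groups the remaining fields by coverage pattern keeping the cheapest representative, and runs a subset DP seeded with the mandatory cost of $k$; for general defeaters it uses an $n^{O(m)}$ branch-on-worlds algorithm. Your explicit gadget (the zero-cost anchor field $h_0$ placed in every $w_i$), the NP-membership check, and the downward-closure argument that an optimal $R$ can be taken as $k$ plus at most $m$ witnesses fill in details that the paper only asserts.
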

\noindent
This is a supporting result; the headline bounds (Theorems~\ref{thm:landscape}--\ref{thm:sigma2}) do not depend on it.

\begin{theorem}[Cheap queries]\label{thm:cheap}
On an explicit atlas: prediction $A_{K,D}(t)=\bigvee_{k}[\bigwedge_{h\in k}t_h\wedge\bigwedge_d\neg([k\subseteq d]\wedge\bigwedge_{h\in d}t_h)]$ is a depth-$3$ formula of size $O(|K||D|n)$ (constant depth in the atlas); active reasons are computed by one $\mathrm{poly}(|K|,|D|,n)$ scan and the defeater frontier of a reason by scanning and subset-minimizing the defeaters above it, both with no fixpoint iteration; equivalence of two atlases is in coNP.
\end{theorem}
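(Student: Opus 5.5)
The statement bundles four claims about an explicit atlas $(K,D)$ over $n$ fields: a circuit-size bound for prediction, polynomial scans for active reasons and frontiers, and a coNP bound for equivalence. I would prove each directly from the definition of $A_{K,D}$, in that order.

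\textbf{Prediction.} Unfold the definition: $A_{K,D}(t)=1$ iff some $k\in K$ has $k\subseteq t$ and no $d\in D$ has $k\subseteq d\subseteq t$.
\begin{itemize}
\item The test $[k\subseteq d]$ involves no input variables; it is a constant fixed by the atlas. So each inner literal $\neg([k\subseteq d]\wedge\bigwedge_{h\in d}t_h)$ is either the constant $1$ or the clause $\bigvee_{h\in d}\neg t_h$.
\item The formula is therefore an OR over $k\in K$ of ANDs, each containing the literals $t_h$ for $h\in k$ together with at most $|D|$ clauses of at most $n$ literals.
\item This is an OR--AND--OR formula of depth $3$ and size $O(|K|(n+|D|n))=O(|K||D|n)$. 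Correctness is immediate from the definition.
\end{itemize}

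\textbf{Active reasons and frontier.} Define the active (standing) reasons at $t$ as the anchors $k\in K$ with $k\subseteq t$ and no $d\in D$ with $k\subseteq d\subseteq t$.
\begin{itemize}
\item One double loop over $K\times D$ computes this set, with $O(n)$ subset tests per pair, in $O(|K||D|n)$ time.
\item For the frontier $D^H_k$ (or $D^t_k$), first collect $\{d\in D: k\subseteq d\}$ (respectively, additionally require $d\subseteq t$) in $O(|D|n)$ time.
\item Then subset-minimize this collection by pairwise comparison, in $O(|D|^2n)$ time.
\end{itemize}
No fixpoint iteration is needed because standing is defined directly by the atlas and does not depend on any other anchor's status. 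The only point to state explicitly is this non-recursiveness, which contrasts with argumentation-style semantics.

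\textbf{Equivalence.} Two atlases are inequivalent iff some record $t\subseteq H$ has $A_{K,D}(t)\neq A_{K',D'}(t)$. Such a $t$ is a polynomial-size certificate, checkable in polynomial time by the prediction evaluation above. Hence inequivalence is in NP and equivalence is in coNP.

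I expect no real obstacle. The only delicate point is the depth/size accounting: one must check that treating $[k\subseteq d]$ as a compile-time constant is legitimate for a circuit family indexed by the atlas, and that pruning the constant-true conjuncts keeps depth $3$ rather than $4$.
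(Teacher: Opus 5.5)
Your proposal is correct and follows essentially the same route as the paper. It treats $[k\subseteq d]$ as a compile-time constant to get the depth-$3$, $O(|K||D|n)$ formula, does a single $K\times D$ scan for standing anchors, scans and subset-minimizes the defeaters containing $k$ in $O(|D|^2n)$, and uses one distinguishing record as a polynomial-time-checkable certificate of inequivalence, which places equivalence in coNP.
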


\noindent
So reading the atlas is cheap, and the landscape's hardness lives entirely in the \emph{optimization} of robust cores and disclosures---not in the representation (Table~\ref{tab:queries}). (We state prediction as a constant-depth \emph{formula in the atlas size}, not as an $\mathrm{AC}^0$ membership claim about $A$: the atlas can be exponentially larger than a succinct description of $A$.)

\begin{table}[t]
\centering\small
\begin{tabular}{@{}lc@{}}
\toprule
Query & Complexity \\
\midrule
Prediction $A(t)$ & depth-3 formula, $O(|K||D|n)$ \\
Active reasons / frontier $D_k$ & poly: scan, then subset-minimize \\
\textsc{verify} core \ (reject / accept) & P\ /\ coNP-c \\
\textsc{min-core} \ (reject / accept) & NP-c\ /\ $\Sigma_2^p$-c \\
Min.\ certified disclosure & dec.\ NP-c; opt.\ FPT (def.-free) \\
Equivalence of two atlases & in coNP \\
\bottomrule
\end{tabular}
\caption{Query support for boundary atlases: reading the atlas is cheap; the hardness is concentrated in robust-core and disclosure \emph{optimization}.}
\label{tab:queries}
\end{table}

\section{Bridge to Formal XAI}

\begin{proposition}[Atlas entries refine \AXp]\label{prop:bridge}
Under the one-sided positive-disclosure encoding, for a monotone $A$ and an accepting record $t$, the subset-minimal entries $k\subseteq t$ coincide with the positive-literal \AXp{}s of $t$, i.e.\ the prime implicants of $A$ contained in $t$. (For general signed encodings, absent literals must be represented as explicit value-literals; the atlas then refines the positive branch reason by adding its defeater frontier.) In general, an entry $k$ standing at $t$ is a branch-sufficient reason ($A\equiv 1$ on $[k,t]$), its minimal extending defeaters form a branch-level contrastive frontier, and a robust core of an accepting $t$ is a minimum-cardinality sufficient reason that survives all further disclosure within $t$. Label-level contrastive explanations are recovered only when the defeated reason has no alternative standing anchor.
\end{proposition}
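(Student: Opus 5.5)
We prove the three parts of Proposition~\ref{prop:bridge} separately. Each follows from the definition of $A_{K,D}$, the representation lemma $A_{\Kstar,\Dstar}=A$, and the robust-core definition. We read ``entries'' as the canonical entries $\Kstar$.

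\emph{Monotone case.} Let $A$ be monotone, and let $t$ be accepting. The key observation is that $\Dstar=\emptyset$. Indeed, $d\in\Dstar$ would need $A(d)=0$ together with $A(d\setminus\{h\})=1$ for some $h$, and this contradicts monotonicity. Hence every entry $k\subseteq t$ has $A\equiv1$ on $\Up k$, so each entry is an implicant of $A$.

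Next we show that every prime implicant $k\subseteq t$ is an entry. Since $A(\emptyset)=0$, we have $k\neq\emptyset$. By minimality, $A(k\setminus\{h\})=0$ for every $h\in k$, so $k\in\Kstar$. It is also subset-minimal among the entries, because every entry is an implicant.

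Conversely, let $k$ be a subset-minimal entry contained in $t$. Suppose some $k'\subsetneq k$ had $A(k')=1$. Remove elements from $k'$ one at a time, starting from $k'$ and descending toward $\emptyset$. Because $A(\emptyset)=0$, this chain must cross a $1\to0$ boundary, and the set $k''$ sitting on the accepting side of that crossing is an entry strictly inside $k$. This contradicts minimality, so $k$ is a prime implicant.

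Finally, positive-literal \AXp{}s of $t$ under the one-sided encoding are exactly the prime implicants contained in $t$. Fixing the present fields to $1$ and letting the rest vary, monotonicity makes the all-absent completion the worst case. The parenthetical remark about signed encodings is definitional and needs no proof.

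\emph{General case.} Let $k\in K$ stand at $t$, and let $s\in[k,t]$. Then $k\subseteq s$. Any $d\in D$ with $k\subseteq d\subseteq s$ would also satisfy $d\subseteq t$, contradicting that $k$ stands at $t$. So $k$ stands at $s$, and $A(s)=1$; thus $k$ is branch-sufficient on $[k,t]$.

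The minimal $d\in D$ with $k\subseteq d$ form the frontier $D_k^H$, and each of them defeats the branch at record $d$. This is the branch-level contrastive frontier.

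For the robust-core claim we unfold the definition. $X$ is a robust core iff $A\equiv1$ on $[X,t]$, which is exactly sufficiency for every disclosure within $t$. Taking $|X|=\sigma(t)$ gives a minimum-cardinality such set.

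The label-level remark follows from one observation. Suppose defeating $k$ at some $s\supseteq d$ leaves no other anchor standing. Then $A(s)=0$, so $d\setminus k$ is a genuine label flip. When another anchor does stand, $A$ stays at $1$ and the flip is only at branch level.

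The only delicate step is the converse in the monotone case: showing that a minimal entry is prime. It relies on the descending-chain argument using $A(\emptyset)=0$ to locate an entry inside any smaller accepting set.
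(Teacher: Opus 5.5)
Your proposal is correct and follows essentially the same route as the paper's proof. Under monotonicity $\Dstar=\emptyset$, so $\min_\subseteq\Kstar$ are exactly the minimal accepting records (the prime implicants, hence the positive-literal \AXp{}s); in the general case, a defeater inside $[k,t]$ would already defeat $k$ at $t$, and the frontier, robust-core and label-level remarks are definitional unfoldings. Your descending-chain argument that a subset-minimal entry is prime just supplies the detail the paper asserts in one phrase.
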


\noindent
So the hard queries of Sections~3--4 are not abstract combinatorics dressed up as explanation: a robust core \emph{is} a sufficient reason---the familiar \AXp{} in the monotone fragment---made disclosure-aware, and its defeater frontier is the contrastive counterpart. The hardness we prove is the hardness of the explanations practitioners already ask for, once an explanation must survive what is disclosed next.

\paragraph{Audit pipeline.}
End to end: (i) compile a symbolic classifier (tree or rule list) into its lower-cover faces $(\Kstar,\Dstar)$ (Lemma in the appendix); (ii) answer prediction and active reasons by polynomial scans and a reason's defeater frontier by scanning and subset-minimizing the defeaters above it (Theorem~\ref{thm:cheap}); (iii) for an accepting record, enumerate its maximal rejected sub-records on the small exact cube and reduce robust-core minimization to a minimum hitting set over them; (iv) solve that hitting set exactly when the controlling parameters are small---the regime we find in these exact symbolic tabular audits---with Theorem~\ref{thm:sigma2} explaining the worst case.

\section{Experiments}\label{sec:exp}

\begin{figure}[t]
\centering
\begin{tikzpicture}
\begin{axis}[
  width=\columnwidth, height=4.5cm,
  xlabel={Boolean cube size $k$}, ylabel={median core $\sigma$},
  xtick={4,6,8,10,12}, ymin=0, ymax=7, ytick={0,2,4,6},
  legend style={font=\scriptsize,at={(0.5,0.98)},anchor=north,draw=none,fill=none,/tikz/every even column/.append style={column sep=6pt}},
  legend columns=2, legend cell align=left,
  label style={font=\footnotesize}, tick label style={font=\scriptsize},
  grid=major, grid style={gray!18},
]
\addplot[mark=*,thick,blue] coordinates {(6,3)(8,4)(10,3)(12,3)}; \addlegendentry{breast-cancer}
\addplot[mark=triangle*,thick,green!55!black] coordinates {(6,4)(8,4)(10,4)(12,4)}; \addlegendentry{adult}
\addplot[mark=square*,thick,orange!85!black] coordinates {(6,1.5)(8,2)(10,3)(12,2)}; \addlegendentry{credit-g}
\addplot[mark=x,thick,red,dashed] coordinates {(4,2)(6,3)(8,4)}; \addlegendentry{parity $\sigma{=}n/2$}
\end{axis}
\end{tikzpicture}
\caption{Robust-core size vs.\ cube size. On real classifiers the median core stays in the low single digits (flat); the adversarial parity family grows linearly ($\sigma{=}n/2$).}
\label{fig:sigma}
\end{figure}

\begin{table*}[t]
\centering
\small
\begin{tabular}{@{}lcccccc@{}}
\toprule
classifier & $n$ & $|K|$ & $|D|$ & $\sigma$ med/max & $\sigma_{\max}/n$ & $m$ max \\
\midrule
breast-cancer & 6 & 16 & 3 & 3/4 & 0.67 & 6 \\
breast-cancer & 8 & 60 & 17 & 4/4 & 0.50 & 8 \\
breast-cancer & 10 & 256 & 164 & 3/5 & 0.50 & 9 \\
breast-cancer & 12 & 1022 & 648 & 3/5 & 0.42 & 10 \\
credit-g & 6 & 17 & 26 & 1.5/4 & 0.67 & 4 \\
credit-g & 8 & 68 & 117 & 2/4 & 0.50 & 5 \\
credit-g & 10 & 256 & 430 & 3/4 & 0.40 & 6 \\
credit-g & 12 & 1132 & 1861 & 2/4 & 0.33 & 8 \\
adult & 6 & 9 & 1 & 4/4 & 0.67 & 5 \\
adult & 8 & 40 & 4 & 4/4 & 0.50 & 5 \\
adult & 10 & 176 & 32 & 4/4 & 0.40 & 7 \\
adult & 12 & 704 & 144 & 4/4 & 0.33 & 7 \\
\midrule
\multicolumn{7}{@{}l}{\emph{adversarial (from the reductions):}}\\
parity chain & 4,6,8 & \multicolumn{5}{l}{robust core $\sigma=n/2$ (linear)} \\
trap Vertex-Cover & 4,6,8 & \multicolumn{5}{l}{greedy/opt $=1.00,1.33,1.50\times$ (grows)} \\
\bottomrule
\end{tabular}
\caption{Over \emph{all} accepting records of each cube, robust cores on real classifiers stay small in \emph{absolute} terms ($\sigma_{\max}/n$ ranges $0.67$ to $0.33$ across the separate per-$k$ cubes, not a nested scaling sequence), with greedy optimal on $98.3$--$100\%$; on adversarial instances built from our reductions the core size and the greedy gap both blow up.}
\label{tab:exp}
\end{table*}

The goal is not to benchmark a scalable solver but to diagnose whether the parameters controlling \emph{exact} auditing are small on standard symbolic/tabular audits, and to confirm that the worst case is nonetheless reachable. We compile decision trees (synthetic non-monotone predicates and three standard tabular datasets under a deliberately small Boolean abstraction: German credit and Adult/income via OpenML \cite{vanschoren2014openml,kohavi1996adult}, Breast Cancer Wisconsin Diagnostic \cite{wolberg1995breast} via scikit-learn \cite{pedregosa2011sklearn}) to exact boundary atlases over $k\le 12$ binarized features, so the robust-core computation is exhaustive.

\paragraph{Worked audit.}
For the eight-field extension of the lending rule list in Example~\ref{ex:credit} (adding \textsf{collateral}, \textsf{long\_history}, \textsf{high\_dti}, \textsf{fraud\_flag}: \textsf{fraud\_flag} is an unconditional defeater, \textsf{high\_dti} is a defeater unless \textsf{collateral} is present, and an approval defeated by \textsf{recent\_default} is reinstated by \textsf{verified\_guarantor}), the compiled atlas has $41$ entries and $54$ defeaters. For an applicant approved via reinstatement, the atlas returns the entry $\{\textsf{income\_stable},\textsf{low\_debt},\textsf{verified\_guarantor}\}$ together with its \emph{global} defeater frontier $\{{+}\textsf{high\_dti}\},\{{+}\textsf{fraud\_flag}\}$, i.e.\ counterfactual additional evidence atoms that would defeat this branch \emph{if admitted into the audit universe}. The robust core reported below is relative to the chosen complete-record universe; when such atoms lie outside that universe they are frontier (contrastive) information, not constraints in the $t$-relative robustness check. A signed \AXp{} can record current absences, but it does not enumerate this defeater frontier as a first-class audit object; the atlas does, and here the robust core coincides with the positive part of the \AXp. All reading queries are microsecond-scale in this Python implementation; exact per-query timings are hardware-dependent and reported only as an artifact sanity check. Only \emph{minimal}-robust-core search is the $\Sigma_2^p$-hard query.

\paragraph{Exact tabular audits: typical cores are small.}
Table~\ref{tab:exp} reports, over \emph{all} accepting records of each cube, the robustness radius $\sigma$ and the FPT parameter $m$ (number of bad worlds) across exact cubes with $k=6,8,10,12$ selected Boolean features. The \emph{median} robust core stays in the low single digits ($2$--$4$ across datasets) and the worst case stays small in absolute terms, with $\sigma_{\max}/n$ ranging $0.67$ down to $0.33$ across the cubes (since each $k$ re-selects and retrains on its own top-$k$ Boolean features, these are separate cubes rather than a single nested scaling sequence, so we read this as ``cores stay small,'' not as a scaling law). The greedy hitting set attains the optimum on $98.3$--$100\%$ of accepting records (only breast-cancer at $k{=}8$ falls below $100\%$). The small typical core is not an artifact of capping tree depth: a control on a fixed feature set finds $\sigma_{\max}$ saturating at $n/2$ already, identical for depth-$6$ and unbounded-depth trees, with median $\sigma$ held at $\approx 2$ throughout. Across $5$ training seeds the radius is stable (median $\sigma=3.2\pm0.4,\,2.5\pm0.5,\,4.0\pm0.0$; max $\sigma=4.8\pm0.4,\,4.0,\,4.0$), and its distribution is concentrated (on the rule list of Example~\ref{ex:credit}, $\sigma$ percentiles $p_{25}/p_{50}/p_{75}/\max=2/2/2/4$). The accept-side verification cost we observe grows as $2^{|t\setminus X|}$ (the natural coNP sweep) while the reject-side pinning verifier stays $O(|K||D|)$, illustrating the coNP-vs-P split of Theorem~\ref{thm:landscape}.

\paragraph{Model quality, and reasons vs robust cores.}
The compiled trees are nontrivial fitted models rather than hand-coded degenerate predicates: at $k{=}12$ they have $33$--$62$ leaves and in-sample accuracy $0.78$/$0.82$/$0.97$ (credit/income/medical), evidence of nondegeneracy rather than a generalization claim. The robust core and a classical positive-literal \AXp{} are related but distinct: comparing the cardinality-minimal robust core with the atlas's cardinality-minimal positive-literal explanation (a standing entry anchor), under deterministic cardinality-first tie-breaking, the two agree as sets on most accepting records and differ on $1.6$/$4.9$/$27.3\%$ at $k{=}12$, with cardinalities almost always equal ($\le 0.7\%$ differ). A robust core is label-sufficient over the whole interval $[X,t]$, not necessarily a single standing branch anchor---different supersets may be certified by different anchors---so it can be smaller than, equal to, or simply set-different from a branch \AXp{} support. The genuinely new audit object is the \emph{global branch frontier} the atlas attaches to each reason (median size $0$--$3$ at $k{=}12$): a bare \AXp{} does not enumerate this branch-level counterfactual frontier, and while contrastive explanations \cite{ignatiev2020contrastive} recover related information, the atlas attaches it directly to each standing reason---exactly what the queries of Section~\ref{sec:landscape} optimize. These figures are regenerated by the artifact (\texttt{model\_quality.py}, \texttt{axp\_robust\_compare.py}).

\paragraph{Adversarial instances bite.}
On constructed instances from our reductions the hardness appears: the parity chain $A(s)=r(s)\bmod 2$ has robust core $\sigma=n/2$ (linear, not $O(1)$), and on the Vertex-Cover atlas the greedy core / optimal core ratio rises with $n$ ($1.00\to1.33\to1.50$ for $n=4,6,8$), so exactness matters off the easy regime.

\noindent
As a correctness sanity check (not a contingent finding), the anchored rule reconstructs each learned tree exactly on its full cube, and the atlas entries match the classical \AXp{}s in the monotone fragment.

\section{Related Work}

\paragraph{Complexity of explanations.}
Explanation queries have a rich complexity theory: across model classes \cite{waldchen2021complexity,barcelo2020interpretability}, for trees specifically \cite{izza2022trees}, with sufficient-reason and shortest-implicant problems reaching $\Sigma_2^p$ \cite{umans2001shortest}, and with knowledge compilation as the route to tractable online queries \cite{audemard2020tractable,darwiche2002kcmap,shih2018symbolic}. All of it studies a \emph{fixed} instance over a \emph{static} space. We turn that axis: the object is a reason's \emph{robustness under future disclosure}, and the question is where its audit queries land in the polynomial hierarchy \cite{stockmeyer1976ph}.

\paragraph{Formal and model-agnostic XAI.}
Abductive/contrastive explanations and their duality \cite{ignatiev2019abduction,ignatiev2020contrastive,reiter1987diagnosis,marquessilva2022trustworthy,marquessilva2023logic} are the objects we audit; model-agnostic methods \cite{ribeiro2016lime,lundberg2017shap,ribeiro2018anchors} give no robustness guarantee. We target exact explanations for symbolic models only.

\paragraph{Argumentation and case-based reasoning.}
Defeasible accept/reject boundaries are the staple of case-based argumentation \cite{cyras2016aacbr,paulinopassos2021monotonicity}, which we treat as one source of what gets compiled---without claiming to discover a canonical boundary. Second-level hardness is itself familiar there: skeptical and credulous acceptance are already $\Sigma_2^p$/$\Pi_2^p$ for several semantics. But ours arises from a different place. The alternation is the \emph{incremental-disclosure} quantifier---a small core that holds against \emph{all} future disclosures---not the extension semantics; the contribution is the complexity of auditing the compiled object under that quantifier.

\section{Limitations}

Several limitations bound the scope. (i) \emph{Compiled-object cost.} The boundary atlas can be exponentially larger than a succinct classifier (decision tree, rule list, circuit); our complexity results are about the compiled object, in the knowledge-compilation tradition of paying an offline compilation cost for online query structure. (ii) \emph{Exact symbolic audits, not a black-box benchmark.} Our experiments are exact audits of depth-limited symbolic classifiers over small Boolean abstractions ($k\le 12$ binarized fields), chosen so that robust-core computation is exhaustive; they diagnose the governing parameters, and are not a scalable black-box XAI benchmark, and the small-parameter regime we observe is contingent (the adversarial families show it can fail). (iii) \emph{One-bit binarization.} Mapping each feature to a single Boolean loses information; signed value-literal encodings can represent $x_i{=}0$ and $x_i{=}1$ separately but grow the atlas and are outside our exact experiments. (iv) \emph{Universe-relative guarantee.} Robustness is relative to the supplied audit universe $t$: a core robust within $t$ may cease to be robust if the universe is enlarged to admit additional defeaters, and a one-sided positive-disclosure encoding cannot certify the \emph{absence} of a defeater outside $t$ (absence certificates require signed value-literals or explicit negative-evidence atoms). (v) \emph{Non-uniqueness.} Minimum robust cores and \AXp{}s need not be unique; all empirical comparisons use deterministic cardinality-first tie-breaking, and the reported percentages should be read under that tie-breaking, not as uniqueness claims. Finally, the $\Sigma_2^p$ result is verified computationally and by independent field-level checks but, like all reductions, rests on the gadget analysis we sketch here and fully prove in the supplement.

\section{Conclusion}

We asked \emph{where} the cost of auditing a defeasible explanation lives, and located it. Reading a compiled boundary atlas---prediction, active reasons, and a reason's defeater frontier---is cheap; \emph{optimizing} the audit is not. Verifying a robust core is coNP-complete and finding a minimal one is $\Sigma_2^p$-complete, completing a four-cell P\,/\,coNP-c\,/\,NP-c\,/\,$\Sigma_2^p$-c landscape and placing disclosure-robust explanation, to our knowledge for the first time, at the second level of the polynomial hierarchy. That barrier is a worst case, not the common one: across the exact small-Boolean audits we run, the governing parameter---the number of maximal bad worlds---stays in the low single digits, so exact robust auditing is tractable in that audited regime. Naming both, the second-level barrier and the small parameter below it, is what turns ``audit a defeasible explanation'' from a slogan into a problem with a precise place in the hierarchy and a usable regime. Natural next steps are tighter atlas-size and uniformity bounds, richer non-subset disclosure orders, and signed encodings that can certify the \emph{absence} of a defeater.

\bibliography{refs}

\appendix
\setcounter{theorem}{0}\setcounter{lemma}{0}\setcounter{definition}{0}\setcounter{proposition}{0}\setcounter{figure}{0}\setcounter{section}{0}
\renewcommand{\thetheorem}{S\arabic{theorem}}
\renewcommand{\thelemma}{S\arabic{lemma}}
\renewcommand{\thedefinition}{S\arabic{definition}}
\renewcommand{\theproposition}{S\arabic{proposition}}
\renewcommand{\thefigure}{S\arabic{figure}}
\renewcommand{\thesection}{\Alph{section}}

This appendix gives the full proofs sketched in the main text (Section/Theorem numbers refer to the main paper), the disclosure and cheap-query results, the bridge, experimental details, and the reproducibility checklist.

\section{Setup recalled}
$H=\{1,\dots,n\}$, record $t\subseteq H$, predicate $A:2^H\to\{0,1\}$ with $A(\emptyset)=0$. An anchored pair $(K,D)$ ($K\cap D=\emptyset$, $\emptyset\notin K$) induces $A_{K,D}(t)=1\iff\exists k\in K:k\subseteq t\wedge\neg\exists d\in D:k\subseteq d\subseteq t$; $k$ is \emph{standing} at $t$ when $k\subseteq t$ and no $d\in D$ has $k\subseteq d\subseteq t$. $A_{K,D}$ is poly-time evaluable. A \emph{robust core} of $t$ is $X\subseteq t$ with $A(s)=A(t)$ for all $s\in[X,t]$; $\sigma(t)=\min|X|$. The \emph{canonical} entries/exits are the lower-cover boundary faces $\Kstar=\{k\neq\emptyset:A(k){=}1,\exists h\in k\,A(k{\setminus}\{h\}){=}0\}$ and $\Dstar=\{d:A(d){=}0,\exists h\in d\,A(d{\setminus}\{h\}){=}1\}$.

\begin{lemma}[Representation]\label{l:repr}
For every finite $A$ with $A(\emptyset)=0$, $A_{\Kstar,\Dstar}=A$. (The faces are \emph{local} lower covers, not globally minimal accepting/rejecting records, so they represent non-monotone $A$ with $1{\to}0{\to}1$ re-entries.)
\end{lemma}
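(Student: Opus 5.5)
The plan is to prove the two directions separately, based on one observation linking $\Dstar$ to upward $1{\to}0$ flips. \emph{Chain observation:} if $u\subseteq v$ with $A(u)=1$ and $A(v)=0$, then some $d\in\Dstar$ satisfies $u\subseteq d\subseteq v$. To see this, add the fields of $v\setminus u$ to $u$ one at a time, giving a chain $u=c_0\subset c_1\subset\dots\subset c_r=v$ with $|c_{i}\setminus c_{i-1}|=1$. The label starts at $1$ and ends at $0$, so there is a first index $i$ with $A(c_{i-1})=1$ and $A(c_i)=0$. Writing $c_i=c_{i-1}\cup\{h\}$ shows $c_i\in\Dstar$ by definition, and $u\subseteq c_i\subseteq v$.

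\emph{Case $A(t)=0$.} I will show that no anchor stands at $t$. Any $k\in\Kstar$ with $k\subseteq t$ has $A(k)=1$ by definition. The chain observation applied to $u=k$, $v=t$ gives $d\in\Dstar$ with $k\subseteq d\subseteq t$, so $k$ is defeated. Since $\Kstar\cap\Dstar=\emptyset$ and $\emptyset\notin\Kstar$ hold automatically ($A$ differs on the two families, and $A(\emptyset)=0$), $(\Kstar,\Dstar)$ is a valid anchored pair. Hence $A_{\Kstar,\Dstar}(t)=0$.

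\emph{Case $A(t)=1$.} Here I must exhibit a standing entry. A naive choice, an inclusion-minimal $k$ with $A\equiv1$ on $[k,t]$, need not lie in $\Kstar$: a small three-field example with $A(ac)=A(bc)=0$ and $A(a)=A(b)=A(ab)=A(abc)=1$ shows that the minimal such $k=\{a,b\}$ has no $0$ lower cover. The anchor should instead be approached \emph{from below a zero}. Because $A(\emptyset)=0$, the family $\{s\subseteq t: A(s)=0\}$ is nonempty. Choose an inclusion-maximal member $s$; then $s\neq t$. Pick any $h\in t\setminus s$ and set $k=s\cup\{h\}$. By maximality of $s$, every record strictly between $s$ and $t$ (inclusive of $t$) has label $1$; in particular $A(k)=1$. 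Together with $A(k\setminus\{h\})=A(s)=0$, this gives $k\in\Kstar$. Moreover, every $d$ with $k\subseteq d\subseteq t$ lies strictly above $s$, so $A(d)=1$ and $d\notin\Dstar$. Thus $k$ stands at $t$, and $A_{\Kstar,\Dstar}(t)=1$.

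The main obstacle is the accepted case, specifically choosing the right anchor. I expect the maximal-zero construction above to resolve it cleanly. I would also record as a by-product that, for $k\in\Kstar$, ``$k$ stands at $t$'' is equivalent to ``$A\equiv1$ on $[k,t]$''; this follows from the chain observation, and the main text implicitly uses it when it reads standing anchors as branch-sufficient reasons.
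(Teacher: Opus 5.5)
Your proof is correct. The rejected case matches the paper: your chain observation is exactly its argument for $A_{\Kstar,\Dstar}(t)=1\Rightarrow A(t)=1$, read contrapositively. The accepted case, however, takes a genuinely different route.

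\textbf{The paper's route.} It calls $u\subseteq t$ \emph{clear} if $A(u)=1$ and no exit lies in $[u,t]$, and takes a $\subseteq$-minimal clear $k$. If $k\notin\Kstar$, every lower cover of $k$ is accepted, and by minimality a chosen $k\setminus\{h\}$ is not clear. This gives an exit $d$ with $k\setminus\{h\}\subseteq d\subseteq t$ and $h\notin d$. A second application of the chain argument, using clearness of $k$, yields $A(k\cup d)=1$, so $k\cup d=d\cup\{h\}$ is a clear entry. Your three-field example is precisely the situation this repair handles: there $\{a,b\}$ is minimal clear but not an entry, and the repair outputs $\{a,b,c\}$.

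\textbf{Your route.} You start from a $\subseteq$-maximal rejected $s\subseteq t$ and step up once. Maximality alone (with $A(s)=0$) gives both $s\cup\{h\}\in\Kstar$ and the absence of any exit in $[s\cup\{h\},t]$. So you need neither the auxiliary notion of clear records nor a second chain argument.

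\textbf{What each buys.} Your version is shorter and explicitly constructive. It also makes a structural link the paper does not: every maximal bad world $s\in B_t$ (the family in the small-bad-world proposition) and every $h\in t\setminus s$ gives a standing canonical entry $s\cup\{h\}$. The paper's version stays within the vocabulary of standing records. It shows how a minimal standing-type witness that is not itself an entry gets replaced by a re-entry $d\cup\{h\}$ just above a defeating exit, which illustrates the $1{\to}0{\to}1$ mechanism the lemma advertises.

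Your case split also shows that the paper's closing remark (``the rejected case is dual via $\Dstar$'') is unnecessary, since its two directions already cover both values of $A(t)$. Your by-product equivalence (for $k\in\Kstar$, standing at $t$ iff $A\equiv1$ on $[k,t]$) is correct as well.
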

\begin{proof}
Call $u\subseteq t$ \emph{clear} if $A(u)=1$ and no $d\in\Dstar$ has $u\subseteq d\subseteq t$ (i.e.\ $u$ is standing at $t$).

\textbf{($A_{\Kstar,\Dstar}(t){=}1\Rightarrow A(t){=}1$.)} Let $k\in\Kstar$ be standing at $t$, so $A(k)=1$. On any chain $k=u_0\subset\cdots\subset u_\ell=t$, a step with $A(u_i)=1$, $A(u_{i+1})=A(u_i\cup\{h\})=0$ would make $u_{i+1}$ a lower-cover-flip exit, i.e.\ $u_{i+1}\in\Dstar$ with $k\subseteq u_{i+1}\subseteq t$, contradicting standing. So the value never drops along the chain and $A(t)=1$.

\textbf{($A(t){=}1\Rightarrow$ some $k\in\Kstar$ standing.)} The record $t$ is clear (any $d\in\Dstar$ with $t\subseteq d\subseteq t$ is $t$ itself, but $A(t)=1$ so $t\notin\Dstar$), so clear records exist. Suppose for contradiction that no clear record lies in $\Kstar$, and take a $\subseteq$-minimal clear $k$; then $k\notin\Kstar$, so \emph{every} lower cover accepts: $A(k\setminus\{h\})=1$ for all $h\in k$. Fix $h$. By minimality, $k\setminus\{h\}$ is not clear, so some $d\in\Dstar$ has $k\setminus\{h\}\subseteq d\subseteq t$; since $k$ is clear, $k\not\subseteq d$, so $h\notin d$. Now $A(k)=1>0=A(d)$ with $k\subseteq k\cup d\subseteq t$; on a chain from $k$ to $k\cup d$ a $1\to0$ step would give an exit in $\Dstar$ containing $k$ within $t$, contradicting clearness of $k$, so $A(k\cup d)=1$. As $k\cup d=d\cup\{h\}$ has the rejecting lower cover $d$, we get $k\cup d\in\Kstar$; and any exit $d'\in\Dstar$ with $k\cup d\subseteq d'\subseteq t$ would satisfy $k\subseteq d'\subseteq t$, again contradicting clearness of $k$, so $k\cup d$ is clear---a clear $\Kstar$ record, contradiction. Hence some clear $k\in\Kstar$ exists, and it is standing at $t$. The rejected case is dual via $\Dstar$.
\end{proof}

\section{The landscape (Theorem 1)}

\subsection{Verify, rejected side: P}
\begin{lemma}[Pinning]\label{l:pin}
If $A(t)=0$, then $X\subseteq t$ is a robust core iff for every anchor $k\in K$ with $k\subseteq t$ there is $d\in D$ with $k\subseteq d\subseteq X\cup k$ (equivalently $d\setminus k\subseteq X$).
\end{lemma}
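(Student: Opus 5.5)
The argument proves the two directions of the equivalence separately, working directly from the definition of $A_{K,D}$. Note that when $A(t)=0$, a robust core $X$ is exactly a set for which every $s\in[X,t]$ has $A_{K,D}(s)=0$, i.e.\ no anchor stands at any such $s$. In this setting the record $X\cup k$ (for an anchor $k\subseteq t$) lies in $[X,t]$, and it is the smallest record in that interval containing $k$. This observation drives both directions.

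\emph{($\Leftarrow$).} Assume every anchor $k\in K$ with $k\subseteq t$ is pinned by some $d_k\in D$ with $k\subseteq d_k\subseteq X\cup k$. Take any $s\in[X,t]$ and any anchor $k\subseteq s$. Then $k\subseteq s\subseteq t$, so $k$ is pinned, and $d_k\subseteq X\cup k\subseteq s$ because $X\subseteq s$ and $k\subseteq s$. So $k\subseteq d_k\subseteq s$, and $k$ is not standing at $s$. Anchors not contained in $s$ are irrelevant. Hence no anchor stands at $s$, $A(s)=0=A(t)$, and $X$ is a robust core.

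\emph{($\Rightarrow$).} Assume $X$ is a robust core and fix an anchor $k\subseteq t$. Put $s=X\cup k$. Since $X\subseteq t$ and $k\subseteq t$, we have $s\in[X,t]$, so robustness gives $A(s)=0$. As $k\subseteq s$, the anchor $k$ cannot stand at $s$, so some $d\in D$ satisfies $k\subseteq d\subseteq s=X\cup k$, which is the pinning condition. For the stated equivalence: if $k\subseteq d$, then $d\subseteq X\cup k$ holds iff $d\setminus k\subseteq X$.

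No step here is a serious obstacle. The one point needing care is choosing the witness $s=X\cup k$ in the forward direction, i.e.\ the least element of $[X,t]$ above $k$. This choice is what makes pinning "relative to $X\cup k$" rather than relative to $t$, and it also yields the $O(|K||D|n)$ scan used for the P bound.
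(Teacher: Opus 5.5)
Your proof is correct and follows the paper's argument essentially verbatim: for ($\Leftarrow$) you use $X\cup k\subseteq s$ to place the pinning defeater inside every $s\in[X,t]$, and for ($\Rightarrow$) you take the witness $s=X\cup k\in[X,t]$, which must be rejected, so $k$ is already defeated inside it. Your explicit check of the $d\setminus k\subseteq X$ reformulation is a small addition that the paper leaves implicit.
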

\begin{proof}
($\Leftarrow$) Take $s\in[X,t]$ and any $k\subseteq s$. Then $X\cup k\subseteq s$, so the guaranteed $d$ satisfies $k\subseteq d\subseteq s$ and kills the branch; every branch in every such $s$ dies, so $A(s)=0$. ($\Rightarrow$) If $X$ is a robust core and $k\subseteq t$, the record $X\cup k\in[X,t]$ is rejected, so the standing test of $k$ fails inside $X\cup k$: some $d\in D$, $k\subseteq d\subseteq X\cup k$.
\end{proof}
The test is an $O(|K||D|n)$ scan, so \textsc{verify} on the rejected side is in P.

\subsection{Min-core, rejected side: NP-complete}
Membership: guess $X$, verify in P by Lemma~\ref{l:pin}. Hardness from \textsc{vertex cover}.
\begin{proof}
Given $G=(V,E)$, $E=\{e_1,\dots,e_q\}$, build fields $\{a_i\}_{i\le q}\cup\{x_v\}_{v\in V}$, $K=\{\{a_i\}\}$, $D=\{\{a_i,x_u\},\{a_i,x_v\}\}$ for $e_i=\{u,v\}$, and $t=H$. Every anchor $\{a_i\}\subseteq t$ is defeated inside $t$ (e.g.\ by $\{a_i,x_u\}$), so $A(t)=0$. By Lemma~\ref{l:pin}, $X$ is a robust core iff for each $i$, $x_u\in X$ or $x_v\in X$, i.e.\ the vertex part of $X$ is a vertex cover; the anchor fields $a_i$ are never needed (each block $d\setminus\{a_i\}=\{x_\cdot\}$ contains no $a_j$, so dropping $a_i$ preserves the criterion). Hence $\sigma(t)=\tau(G)$.
\end{proof}

\subsection{Verify, accepted side: coNP-complete}
\begin{lemma}\label{l:acc-verify}
If $A(t)=1$, $X$ is a robust core iff no record in $[X,t]$ is rejected. A single rejected $s$ refutes it, so the problem is in coNP.
\end{lemma}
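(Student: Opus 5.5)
The proof splits into the two directions of the equivalence, followed by the membership claim.

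For the equivalence, we use the definition of a robust core: $X\subseteq t$ is robust exactly when $A(s)=A(t)$ for every $s\in[X,t]$. Under the promise $A(t)=1$, this says $A(s)=1$ for all such $s$. Equivalently, no $s\in[X,t]$ has $A(s)=0$. This is just negation over a universally quantified statement, so both directions are immediate. ($\Rightarrow$) A rejected $s$ in the interval would violate $A(s)=A(t)$. ($\Leftarrow$) If no record in the interval is rejected, every record there is accepted and agrees with $t$.

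For coNP membership, we show the complement (``$X$ is \emph{not} a robust core'') is in NP. The certificate is a single record $s$ with $X\subseteq s\subseteq t$. The verifier checks the two inclusions in $O(n)$ time. It then evaluates $A_{K,D}(s)$ on the explicit atlas by scanning every $k\in K$ with $k\subseteq s$ and testing whether some $d\in D$ has $k\subseteq d\subseteq s$. That takes $O(|K||D|n)$ time (Theorem~\ref{thm:cheap}), and the verifier accepts iff the value is $0$. The certificate has size at most $n$, polynomial in the input, so the complement is in NP and \textsc{verify-core}$_1$ is in coNP.

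The only point needing care is that the promise $A(t)=1$ is used, and it can even be checked in polynomial time by the same scan. Without it, robustness would read ``$A(s)=A(t)$'' and the rejected-side characterization (Lemma~\ref{l:pin}) would apply instead. No step is a real obstacle here; the substantive work for this cell is the matching coNP-hardness from \textsc{unsat}, which is handled separately.
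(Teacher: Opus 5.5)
Your proposal is correct and matches the paper's argument: under the promise $A(t)=1$ the equivalence is immediate from the definition of a robust core. Membership in coNP follows because a single rejected $s\in[X,t]$ is a polynomial-size witness, checkable by evaluating the explicit atlas in polynomial time (the proof block the paper attaches here is the separate \textsc{unsat} hardness, as you anticipated).
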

\begin{proof}[Hardness from \textsc{unsat}]
Given CNF $\varphi$ over $z_1,\dots,z_p$ with clauses $c_1,\dots,c_r$, use fields $\{T_i,F_i\}_{i\le p}\cup\{v_i\}_{i\le p}\cup\{c_j\}_{j\le r}\cup\{\varrho\}$. Anchors:
\begin{itemize}\itemsep0.1em
\item rescue $\{\varrho\}$;
\item missing-variable $\{v_i\}$, with defeaters $\{v_i,T_i\}$ and $\{v_i,F_i\}$;
\item inconsistent $\{T_i,F_i\}$ (no defeater, hence undefeatable);
\item unsatisfied-clause $\{c_j\}$, with a defeater $\{c_j,\ell\}$ for each literal field $\ell\in\{T_i,F_i\}$ whose literal satisfies $c_j$.
\end{itemize}
Set $X=\{v_i\}_i\cup\{c_j\}_j$ and $t=H$. The rescue anchor stands at $t$ (no defeater contains $\varrho$), so $A(t)=1$; and $A(\emptyset)=0$. Take any $s\in[X,t]$. If $\varrho\in s$ the rescue stands and $A(s)=1$; otherwise $s=X\cup L$ with $L\subseteq\{T_i,F_i\}_i$, and the standing anchors are exactly: $\{v_i\}$ (iff neither $T_i$ nor $F_i\in L$, i.e.\ variable $i$ unassigned), $\{T_i,F_i\}$ (iff both in $L$, i.e.\ inconsistent), and $\{c_j\}$ (iff no satisfying literal of $c_j$ is in $L$, i.e.\ clause $j$ unsatisfied). Hence $A(s)=0$ iff $L$ assigns every variable, consistently, and satisfies every clause---i.e.\ iff $L$ is a satisfying assignment of $\varphi$. Therefore $X$ is a robust core (no rejected $s\in[X,t]$) iff $\varphi$ is unsatisfiable. Membership is Lemma~\ref{l:acc-verify}.
\end{proof}

\subsection{Min-core, accepted side: $\Sigma_2^p$-complete (Theorem 2)}
Membership: guess $X$ ($\exists$); ``$X$ is a robust core'' is the coNP check of Lemma~\ref{l:acc-verify}; so the problem is in $\Sigma_2^p$.

\paragraph{Source problem.}
\begin{definition}[\textsc{mon-dnf-implicant}]
Given a DNF $F(P,Y)$ with the selectable variables $P$ occurring only positively and an integer $\theta$, decide whether some $X\subseteq P$, $|X|\le\theta$, makes $F$ true under every extension of $X$ over $P\cup Y$.
\end{definition}
\begin{lemma}\label{l:s2c}
\textsc{mon-dnf-implicant} is $\Sigma_2^p$-complete.
\end{lemma}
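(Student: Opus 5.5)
Membership in $\Sigma_2^p$ is direct. We guess $X\subseteq P$ with $|X|\le\theta$ (the $\exists$ layer). We then check that $F$ is true under every extension of $X$ over $P\cup Y$, i.e.\ that $F|_{X\mapsto 1}$ is a tautology in the remaining variables. This is a coNP property: a falsifying extension is a polynomial certificate, checked by evaluating the DNF.

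For hardness I would reduce from $\exists\bar x\,\forall\bar y\,\psi$ with $\psi$ a DNF over $x_1,\dots,x_a$ and $\bar y$, which is $\Sigma_2^p$-complete. The construction is the one sketched after Lemma~\ref{lem:mondnf}, with the address gadget made explicit.
\begin{itemize}
\item The selectable variables are the $2a$ tokens $P=\{p_i^0,p_i^1\}_{i\le a}$.
\item $\psi^+$ is obtained from $\psi$ by the substitution $x_i\mapsto p_i^1$, $\neg x_i\mapsto p_i^0$, so it is positive in $P$.
\item The universal variables are $\bar y$, a mode bit $M$, and $b=\lceil\log_2 a\rceil$ address bits $\bar z$ (taking $b\ge 1$).
\item For a code $c\in\{0,1\}^b$, $\mathrm{addr}_c$ is the conjunction of $b$ literals over $\bar z$ spelling $c$; we identify $i\le a$ with one code and call the remaining $2^b-a<a$ codes unused.
\end{itemize}
We set
\[
F=(\neg M\wedge\psi^+)\vee\bigvee_{i}(M\wedge\mathrm{addr}_i\wedge p_i^0)\vee\bigvee_{i}(M\wedge\mathrm{addr}_i\wedge p_i^1)\vee\bigvee_{c\text{ unused}}(M\wedge\mathrm{addr}_c)
\]
and $\theta=a$. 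This is a polynomial-size DNF in which $P$ occurs only positively.

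Correctness goes in two steps.

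\textbf{Mode $M=1$.} Each assignment to $\bar z$ matches exactly one code. For an unused code $F$ is true outright. For code $i$, $F$ reduces to $p_i^0\vee p_i^1$. So $X$ is an implicant on the $M=1$ half iff $X$ contains at least one token of every index $i$. Under the budget $|X|\le a$ this forces exactly one token per index, which defines an assignment $x$ with $x_i=1$ iff $p_i^1\in X$.

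\textbf{Mode $M=0$.} Here $F=\psi^+$. Because $\psi^+$ is monotone in $P$, the hardest extension sets every unchosen token to false. Under that extension $p_i^1$ is true iff $x_i=1$ and $p_i^0$ is true iff $x_i=0$, so $\psi^+$ evaluates to $\psi(x,\bar y)$. Hence $X$ is an implicant iff $\forall\bar y\,\psi(x,\bar y)$.

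Conversely, if $x$ witnesses the QBF, take $X=\{p_i^{x_i}\}_i$, of size exactly $a$. The $M=1$ half is covered by the first step. For the $M=0$ half, any extension only adds true tokens, and by monotonicity $\psi^+\ge\psi(x,\bar y)=1$.

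The step needing the most care is the one in the mode-$M=1$ argument: showing that the budget forces \emph{exactly} one token per index, and that the address gadget covers all $2^b$ codes without blowing up the size. Using binary codes with at most $a$ unused codes settles the size issue. The exactly-one claim is a pigeonhole count: at least one token per index for $a$ indices within budget $a$. After that, the worst-case-extension argument through monotonicity is routine.
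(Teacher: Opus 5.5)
Your proposal is correct and matches the paper's proof step for step: membership via a coNP check of the tautology of $F$ restricted to $X\mapsto 1$, and hardness from $\exists\bar x\,\forall\bar y\,\psi$ using the token pairs $p_i^0,p_i^1$, a mode bit, an address gadget and $\theta=a$, with the pigeonhole count at $M=1$ and the monotone worst-case extension at $M=0$. Your explicit treatment of the address codes adds useful detail; the only slip is that the bound $2^b-a<a$ fails in the degenerate case $a=1$, which is harmless since the argument only needs the gadget to have polynomial size.
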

\begin{proof}
Membership is immediate. For hardness, reduce from $\Phi=\exists x_1\cdots x_a\,\forall y_1\cdots y_b\,\psi$ with $\psi$ a DNF ($\Sigma_2^p$-complete). Introduce positive tokens $p_i^0,p_i^1$ per $x_i$; substitute $x_i\mapsto p_i^1$, $\neg x_i\mapsto p_i^0$ into $\psi$ to obtain $\psi^+$, monotone in $P=\{p_i^b\}$. Add a mode variable $M$ and $\lceil\log a\rceil$ address variables $A$ ($M,A\in Y$), with $\mathrm{addr}_i(A)$ the cube selecting address $i$. Set
\[
\begin{aligned}
F={}&(\neg M\wedge\psi^+)\ \vee\ \textstyle\bigvee_{i=1}^a(M\wedge\mathrm{addr}_i\wedge p_i^0)\\
&\vee\ \textstyle\bigvee_{i=1}^a(M\wedge\mathrm{addr}_i\wedge p_i^1)\ \vee\ \textstyle\bigvee_{c\notin[a]}(M\wedge\mathrm{addr}_c),
\end{aligned}
\]
$\theta=a$.

($\Phi$ true $\Rightarrow$) Let $\xi$ witness $\forall\bar y\,\psi$. Take $X_\xi=\{p_i^{\xi_i}\}$, $|X_\xi|=a$. At $M=1$, any address $i\in[a]$: $F=p_i^0\vee p_i^1$, satisfied since $p_i^{\xi_i}\in X_\xi$; any address $c\notin[a]$: the last disjunct gives $F=1$. At $M=0$: $F=\psi^+$, and since $\psi^+$ is monotone in $P$ the worst extension sets the unchosen tokens false, giving $\psi(\xi,\bar y)=1$ for all $\bar y$. So $X_\xi$ is a positive implicant.

($\exists X,|X|\le a$ implicant $\Rightarrow\Phi$ true) At $M=1$, address $i$, $F=p_i^0\vee p_i^1$, which a free extension can falsify unless $X\cap\{p_i^0,p_i^1\}\neq\emptyset$; over all $i$ that forces $|X|\ge a$, hence exactly one token per index, defining $\xi$. At $M=0$ with the opposite tokens free-to-false, $F=\psi(\xi,\bar y)$; the implicant property gives $\forall\bar y\,\psi(\xi,\bar y)$, so $\Phi$ is true.
\end{proof}

\paragraph{The reduction.}
Let $F=\bigvee_{r=1}^m T_r$, $T_r=\bigwedge_{p\in P_r^P}p\wedge\bigwedge_{y\in P_r^Y}y\wedge\bigwedge_{y\in N_r^Y}\neg y$ (drop terms with $y$ and $\neg y$; clamp $\theta:=\min(\theta,|P|)$, which preserves the answer since $X\subseteq P$ always has $|X|\le|P|$, and keeps $q$ polynomial). Set $\Theta=\theta+m+1$, $q=\Theta+1$. Fields: marker $e$; private label $\lambda_r$ per term; a field per $p\in P$; size-$q$ blocks $B_j$ (per $y_j$), $G_\mu$ (per $\mu\in M_0:=\{e,\lambda_1,\dots,\lambda_m\}$), and $R$. A record $s$ has $y_j(s)=1$ iff $B_j\subseteq s$. Take $t=H$.
\begin{itemize}
\item Anchors: rescue $\rho=R$; force $f_\mu=\{\mu\}\cup G_\mu$ ($\mu\in M_0$); term $k_r=\{e,\lambda_r\}\cup P_r^P\cup\bigcup_{y_j\in P_r^Y}B_j$.
\item Defeaters: negative-literal $d_{r,j}=k_r\cup B_j$ ($y_j\in N_r^Y$); guard-poison $g_{r,\mu}=k_r\cup G_\mu$ for each $\mu\in M_0$ with $\mu\notin k_r$.
\end{itemize}

\emph{Legitimacy.} $d_{r,j},g_{r,\mu}$ strictly contain $k_r$ (using $B_j\not\subseteq k_r$ for $y_j\in N_r^Y$, guaranteed by dropping contradictory terms), and $g_{r,\lambda_\ell}$ ($r\ne\ell$, contains $\lambda_r$, not $\lambda_\ell$) differs from $f_{\lambda_\ell}$ (contains $\lambda_\ell$); so $K\cap D=\emptyset$. \emph{Promise:} $\rho=R$ stands at $t=H$ (no defeater contains $R$), so $A(t)=1$.

\begin{figure}[t]
\centering
\begin{tikzpicture}[font=\footnotesize,>={Latex[length=2mm]},node distance=4.2mm,
  b/.style={draw,rounded corners=2pt,align=center,inner sep=4pt,text width=7.3cm,minimum height=6.5mm},
  src/.style={b,fill=blue!6},
  g/.style={b,fill=gray!8},
  res/.style={b,fill=red!8,draw=red!55},
  fin/.style={b,fill=green!10,draw=green!55!black}]
  \node[src] (s) {\textsc{mon-dnf-implicant} \ (source, $\Sigma_2^p$-complete)};
  \node[g,below=of s]  (ex) {\textbf{markers} $M_0{=}\{e,\lambda_r\}$ \textbf{+ $P$-fields}: existential layer---\\pick one term and its positive tokens};
  \node[g,below=of ex] (un) {\textbf{size-$q$ blocks} $B_j$: a universal $Y$-assignment, realised\\\emph{only} by later disclosure (never fits inside a small core)};
  \node[g,below=of un] (gd) {\textbf{guard blocks} $G_\mu$ \textbf{+ guard-poison} $g_{r,\mu}$:\\deny a small core the force-anchor / wrong-marker shortcut};
  \node[res,below=of gd] (rs) {\textbf{rescue block} $R$: forces the promise $A(t){=}1$,\\ yet $|R|{=}q{>}\Theta$, so a small core cannot use it};
  \node[fin,below=of rs] (o) {robust core of size ${\le}\theta$ \ $\Longleftrightarrow$ \ $F$ has a positive implicant of size ${\le}\theta$};
  \draw[->] (s)--(ex); \draw[->] (ex)--(un); \draw[->] (un)--(gd); \draw[->] (gd)--(rs); \draw[->] (rs)--(o);
\end{tikzpicture}
\caption{Reduction roadmap for the $\Sigma_2^p$-hardness. Each gadget enforces one invariant; together they keep the audited record accepted yet force the universal quantifier to range only over admissible later disclosures, so \emph{minimizing} the robust core decides a $\Sigma_2^p$ implicant question. Field-level definitions are above; the invariants are proved in Lemmas~\ref{l:sem}--\ref{l:wlog}.}
\label{fig:roadmap}
\end{figure}

\begin{lemma}[Semantics]\label{l:sem}
For $s\subseteq t$ with $M_0\subseteq s$, $R\not\subseteq s$, and $G_\mu\not\subseteq s$ for all $\mu$, set $\beta_s(p)=[p\in s]$ and $\beta_s(y_j)=[B_j\subseteq s]$. Then $A_{K,D}(s)=1\iff F(\beta_s)=1$.
\end{lemma}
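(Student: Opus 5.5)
We prove the two directions by showing that, under the hypotheses of Lemma~\ref{l:sem}, the standing anchors at $s$ are exactly the term anchors $k_r$ whose term $T_r$ is satisfied by $\beta_s$. The first step eliminates the non-term anchors. The rescue $\rho=R$ is not contained in $s$ because $R\not\subseteq s$. Each force anchor $f_\mu=\{\mu\}\cup G_\mu$ is not contained in $s$ because $G_\mu\not\subseteq s$. So $A_{K,D}(s)=1$ iff some $k_r\subseteq s$ has no defeater $d$ with $k_r\subseteq d\subseteq s$.

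\textbf{Containment.} Since $M_0\subseteq s$, both $e$ and $\lambda_r$ lie in $s$. Hence $k_r\subseteq s$ iff $P_r^P\subseteq s$ and $B_j\subseteq s$ for every $y_j\in P_r^Y$. Under $\beta_s$ this says exactly that all positive literals of $T_r$ (selectable and universal) are true.

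\textbf{Defeat.} We enumerate every defeater $d\in D$ with $k_r\subseteq d$; each has the form $d_{r',j}=k_{r'}\cup B_j$ or $g_{r',\mu}=k_{r'}\cup G_\mu$. This is the step needing care, since a defeater attached to a different term $r'\neq r$ might contain $k_r$ and fire across terms. The private labels rule this out. Every defeater built from $k_{r'}$ contains $\lambda_{r'}$. It contains $\lambda_r$ only through $k_{r'}$ (impossible, as $\lambda_r\notin k_{r'}$ for $r'\ne r$) or through an added block. Blocks consist of fresh block fields disjoint from the markers, so they never contribute $\lambda_r$. Thus $k_r\subseteq d$ forces $d$ to be one of $r$'s own defeaters $d_{r,j}$ or $g_{r,\mu}$.

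It remains to check the two kinds of defeater for $k_r$. A guard-poison $g_{r,\mu}\subseteq s$ would require $G_\mu\subseteq s$, which is excluded, so none fires. A negative-literal defeater $d_{r,j}=k_r\cup B_j$ (for $y_j\in N_r^Y$) lies inside $s$ iff $k_r\subseteq s$ and $B_j\subseteq s$, i.e.\ iff $\beta_s(y_j)=1$, which violates the literal $\neg y_j$. Hence $k_r$ stands at $s$ iff every positive literal of $T_r$ holds and every negative literal $\neg y_j$ holds, i.e.\ iff $T_r(\beta_s)=1$.

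Taking the disjunction over $r$ gives $A_{K,D}(s)=1\iff F(\beta_s)=1$. The only nonroutine point is the cross-term exclusion, which rests on the labels $\lambda_r$ being private to $k_r$ and on block fields being disjoint from markers and $P$-fields.
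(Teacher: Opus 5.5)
Your proposal is correct and follows the paper's proof essentially step for step. You eliminate $\rho$ and the force anchors, characterize $k_r\subseteq s$ by the positive literals, show that negative-literal defeaters fire exactly when $\neg y_j$ is violated while guard-poison cannot fire, and rule out cross-term defeat via the private labels $\lambda_r$. Your remark that block fields are disjoint from the markers, so that no $B_j$ or $G_{\lambda_r}$ can supply $\lambda_r$, spells out a point the paper leaves implicit.
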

\begin{proof}
$\rho$ is not standing ($R\not\subseteq s$) and no $f_\mu$ is ($G_\mu\not\subseteq s$), leaving only term anchors. $k_r\subseteq s$ iff $\{e,\lambda_r\}\subseteq s$ (true, $M_0\subseteq s$) and $P_r^P\subseteq s$ and $B_j\subseteq s$ for $y_j\in P_r^Y$---i.e.\ all positive literals of $T_r$ hold under $\beta_s$. $k_r$ is defeated by $d_{r,j}\subseteq s$ iff $B_j\subseteq s$ iff $\beta_s(y_j)=1$, i.e.\ the negative literal $\neg y_j$ is violated. Guard-poison $g_{r,\mu}$ cannot fire ($G_\mu\not\subseteq s$). Cross-term defeat is impossible: for $r'\ne r$, every defeater generated from term $r'$ is built from $k_{r'}$ (and possibly a block), hence contains $\lambda_{r'}$ but \emph{not} the private label $\lambda_r$; since $k_r$ contains $\lambda_r$, no such defeater can contain $k_r$. Hence a term anchor stands iff its term is satisfied, and $A_{K,D}(s)=1\iff F(\beta_s)=1$.
\end{proof}

\paragraph{Profile projection.} Partition the fields into \emph{points} $\Pi=M_0\cup P$ and \emph{blocks} (the size-$q$ sets $B_j$, $G_\mu$, $R$, pairwise disjoint). For a record $s\subseteq t$ define its \emph{profile}
\[
\pi(s)=\big(\,s\cap\Pi,\ \{\,\text{block }B:B\subseteq s\,\}\,\big),
\]
the points present together with the blocks \emph{fully} present.

\begin{lemma}[Profile determines the value]\label{l:profile}
$A_{K,D}(s)$ is a function of $\pi(s)$ alone.
\end{lemma}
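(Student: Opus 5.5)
The plan is to show that every containment test entering the definition of $A_{K,D}(s)$ is decided by $\pi(s)$. Since $A_{K,D}(s)=1$ iff some $k\in K$ has $k\subseteq s$ and no $d\in D$ has $k\subseteq d\subseteq s$, it suffices to prove one claim: for every $k\in K\cup D$, whether $k\subseteq s$ is a function of $\pi(s)$. The other containments, $k\subseteq d$, do not involve $s$ at all and are fixed by the construction.

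\textbf{Key step (decomposition).} Every anchor and defeater is a union of points and whole blocks. I will check this by inspecting each listed type:
\begin{itemize}
\item $\rho=R$ is a single block.
\item $f_\mu=\{\mu\}\cup G_\mu$ is one point plus one block.
\item $k_r$ is the points $\{e,\lambda_r\}\cup P_r^P$ together with the blocks $B_j$ for $y_j\in P_r^Y$.
\item $d_{r,j}=k_r\cup B_j$ and $g_{r,\mu}=k_r\cup G_\mu$ each add one further whole block to $k_r$.
\end{itemize}
So each such set $S$ can be written as $S=S_\Pi\cup\bigcup\mathcal{B}_S$, with $S_\Pi\subseteq\Pi$ and $\mathcal{B}_S$ a set of blocks. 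Blocks are pairwise disjoint from each other and from $\Pi$, so this decomposition is unique.

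\textbf{Consequence.} For such an $S$,
\[
S\subseteq s \iff S_\Pi\subseteq s\cap\Pi \ \text{ and } \ \mathcal{B}_S\subseteq\{B: B\subseteq s\}.
\]
Both conditions on the right read only the two coordinates of $\pi(s)$. Hence the indicator $[k\subseteq s]$ for anchors and $[d\subseteq s]$ for defeaters are functions of $\pi(s)$. Substituting these into the formula for $A_{K,D}$, which is a fixed Boolean combination of these indicators and the $s$-independent constants $[k\subseteq d]$, shows that $A_{K,D}(s)$ depends on $s$ only through $\pi(s)$.

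\textbf{Main obstacle.} The only delicate point is completeness of the case check. One must confirm that no anchor or defeater contains a proper nonempty part of a block, since a partial block would make containment depend on more than $\pi(s)$. This holds because every block occurs in the construction only as a full size-$q$ set $B_j$, $G_\mu$, or $R$. A further consequence is that partial blocks in $s$ are invisible to $A_{K,D}$. This is exactly the fact Lemma~\ref{lem:pr} will use to restrict attention to cores made of points.
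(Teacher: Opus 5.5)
Your proposal is correct and follows the paper's proof essentially verbatim. You decompose every anchor and defeater into points plus whole blocks, so each test $k\subseteq s$ and $d\subseteq s$ reads only $\pi(s)$, while $[k\subseteq d]$ does not depend on $s$; your extra remarks on uniqueness of the decomposition and the absence of partial blocks only make explicit what the paper leaves implicit.
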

\begin{proof}
Every anchor and defeater is a union of points and whole blocks: $k_r=\{e,\lambda_r\}\cup P_r^P\cup\bigcup_{y_j\in P_r^Y}B_j$, $f_\mu=\{\mu\}\cup G_\mu$, $\rho=R$, $d_{r,j}=k_r\cup B_j$, $g_{r,\mu}=k_r\cup G_\mu$. Hence each test ``$k\subseteq s$'' and ``$d\subseteq s$'' depends only on which points lie in $s$ and which blocks are full in $s$, i.e.\ on $\pi(s)$; so does the standing condition of every anchor, and therefore $A_{K,D}(s)$.
\end{proof}

\begin{lemma}[Point-restriction is WLOG]\label{l:wlog}
Let $|X|\le\Theta$ and let $X^\circ=X\cap\Pi$ be its point-restriction. Then $X$ is a robust core of $t=H$ iff $X^\circ$ is. Consequently a robust core of size $\le\Theta$ exists iff one contained in the points does.
\end{lemma}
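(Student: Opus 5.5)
Since $t=H$ and $A(t)=1$ by the promise, a set $Y\subseteq H$ is a robust core iff $A_{K,D}(s)=1$ for every $s\in[Y,H]$. The plan is to prove the two directions of the equivalence separately. The easy direction, from $X^\circ$ to $X$, is pure monotonicity of intervals. The other direction transports each record of $[X^\circ,H]$ to a record of $[X,H]$ with the same profile, and then invokes Lemma~\ref{l:profile}. The consequence then follows at once, because $|X^\circ|\le|X|\le\Theta$ and $X^\circ\subseteq\Pi$.

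\emph{($X^\circ$ robust $\Rightarrow$ $X$ robust).} Since $X^\circ\subseteq X$, we have $[X,H]\subseteq[X^\circ,H]$. So every record above $X$ is already accepted, and nothing more is needed.

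\emph{($X$ robust $\Rightarrow$ $X^\circ$ robust).} Take any $s\in[X^\circ,H]$; the goal is a record $s'\in[X,H]$ with $\pi(s')=\pi(s)$. I would construct it as follows. For each block $B$ (one of the $B_j$, $G_\mu$, or $R$) that is \emph{not} full in $s$, pick a field $b_B\in B\setminus X$. Such a field exists because $|X\cap B|\le|X|\le\Theta<q=|B|$. Then set
\[
s'=(s\cup X)\setminus\{b_B: B\text{ a block with }B\not\subseteq s\}.
\]
The verification has three parts.
\begin{itemize}
\item $X\subseteq s'$, since no removed field lies in $X$.
\item $s'\cap\Pi=s\cap\Pi$. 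The fields of $X\setminus X^\circ$ are block fields, so adding $X$ adds no points; the removed fields are also block fields; and $X^\circ\subseteq s$ already.
\item The full blocks of $s'$ are exactly those of $s$. A block full in $s$ loses no field, so it stays full. A block not full in $s$ misses $b_B$ in $s'$, so it is not full in $s'$.
\end{itemize}
Hence $\pi(s')=\pi(s)$. Lemma~\ref{l:profile} together with robustness of $X$ then gives $A(s)=A(s')=1$.

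The only delicate point, and the step I expect to be the main obstacle, is this profile-preservation. Naively taking $s\cup X$ could \emph{complete} a block that $s$ left only partially present, and so change the universal $Y$-assignment or trigger a guard or rescue block. The size barrier $|X|\le\Theta<q$ is exactly what guarantees a spare field $b_B$ outside $X$ in every block, which we can delete to keep that block incomplete. This is the formal content of the remark after Lemma~\ref{lem:pr} that partial blocks in $X$ are inert.
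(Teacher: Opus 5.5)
Your proof is correct and follows the paper's route. The easy direction is the interval containment $[X,H]\subseteq[X^\circ,H]$; the hard direction transports each $s\in[X^\circ,H]$ to a record of $[X,H]$ with the same profile and applies Lemma~\ref{l:profile}, with the size barrier $|X|\le\Theta<q$ doing the work. The only difference is the witness: the paper uses $s^\ast=(s\cap\Pi)\cup\bigcup_{B\subseteq s}B\cup(X\setminus\Pi)$, while you delete a spare field $b_B\in B\setminus X$ from each non-full block of $s\cup X$. Your version tacitly uses that blocks are pairwise disjoint, which is what ensures full blocks of $s$ lose no field.
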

\begin{proof}
Since $|X|\le\Theta<q$, $X$ contains no full block ($X\cap B\subsetneq B$ for every block $B$). As $X^\circ\subseteq X$ we have $[X,t]\subseteq[X^\circ,t]$, so $A\equiv1$ on $[X^\circ,t]$ implies $A\equiv1$ on $[X,t]$: this is the ``$X^\circ$ robust $\Rightarrow X$ robust'' direction.

Conversely let $X$ be robust and take any $s\in[X^\circ,t]$; we exhibit $s^\ast\in[X,t]$ with $\pi(s^\ast)=\pi(s)$, whence $A(s)=A(s^\ast)$ by Lemma~\ref{l:profile}, and $A(s^\ast)=1$ since $s^\ast\in[X,t]$ and $X$ is robust. Put
\[
s^\ast=(s\cap\Pi)\ \cup\ \textstyle\bigcup_{B\subseteq s}B\ \cup\ (X\setminus\Pi),
\]
i.e.\ the points of $s$, the blocks \emph{full} in $s$, and the (partial) block fields of $X$. Then $s^\ast\subseteq t=H$; and $X\subseteq s^\ast$, since $X\cap\Pi=X^\circ\subseteq s\cap\Pi$ (as $X^\circ\subseteq s$) and $X\setminus\Pi\subseteq s^\ast$. So $s^\ast\in[X,t]$. Finally $\pi(s^\ast)=\pi(s)$: the points of $s^\ast$ are exactly $s\cap\Pi$; a block $B$ with $B\subseteq s$ is included whole, while a block $B$ with $B\not\subseteq s$ has $s^\ast\cap B=(X\setminus\Pi)\cap B\subseteq X\cap B\subsetneq B$, so it is \emph{not} full in $s^\ast$ (the dropped partial fields of $s$ and the added partial fields of $X$ complete no block). Hence the full blocks of $s^\ast$ are exactly those of $s$, $\pi(s^\ast)=\pi(s)$, and $A(s)=A(s^\ast)=1$. As $s\in[X^\circ,t]$ was arbitrary, $X^\circ$ is robust.
\end{proof}

\begin{proof}[Correctness of the reduction]
By Lemma~\ref{l:wlog} we may search robust cores over points.

($\Rightarrow$) Let $X_P\subseteq P$, $|X_P|\le\theta$ be a positive implicant of $F$; put $X=M_0\cup X_P$, $|X|\le\Theta$. For $s\in[X,t]$: if $R\subseteq s$, $\rho$ stands; else if some $G_\mu\subseteq s$, then $\mu\in M_0\subseteq X\subseteq s$ gives $f_\mu\subseteq s$, and $f_\mu$ is undefeated (the only defeaters with a full $G_\mu$ are the $g_{r,\mu}$, which omit $\mu$ since they are added only for $\mu\notin k_r$, hence never contain $f_\mu$); else Lemma~\ref{l:sem} applies and $X_P\subseteq s$ forces $\beta_s$ to extend $X_P$, so $F(\beta_s)=1$ and $A(s)=1$. So $A\equiv1$ on $[X,t]$ and $X$ is a robust core of size $\le\Theta$.

($\Leftarrow$) Let $X$ be a robust core, $|X|\le\Theta$; by Lemma~\ref{l:wlog} take $X$ over points. First, $M_0\subseteq X$: if $\mu\notin X$, consider $s=X\cup G_\mu\in[X,t]$. Then $R\not\subseteq s$; the force anchor $f_\mu$ lacks $\mu$; every other $f_{\mu'}$ lacks its guard; $\rho$ lacks $R$; and any term anchor $k_r\subseteq s$ either misses $\mu$ (if $\mu\in k_r$) or is guard-poisoned by $g_{r,\mu}\subseteq s$ (if $\mu\notin k_r$, since then $g_{r,\mu}$ exists and $G_\mu\subseteq s$). So no anchor stands and $A(s)=0$, contradicting robustness. Hence $M_0\subseteq X$, and $X_P:=X\cap P$ has $|X_P|\le\Theta-(m+1)=\theta$. Now for any extension $\alpha$ of $X_P$ over $P$ and any $\eta$ over $Y$, put $s=X\cup\{p:\alpha(p)\}\cup\bigcup_{\eta(y_j)}B_j$. Then $R\not\subseteq s$, $G_\mu\not\subseteq s$ (blocks not completed), $M_0\subseteq s$, so Lemma~\ref{l:sem} gives $A(s)=F(\beta_s)$ with $\beta_s=(\alpha,\eta)$. Robustness forces $A(s)=1$, so $F(\alpha,\eta)=1$ for all $\alpha,\eta$, i.e.\ $X_P$ is a positive implicant of $F$ of size $\le\theta$.
\end{proof}

The reduction is polynomial and preserves the promise $A(t)=1$; with Lemma~\ref{l:s2c} and membership, \textsc{min-core} on the accepted side is $\Sigma_2^p$-complete.

\paragraph{Machine verification.}
We checked the reduction end to end at the \emph{field level} with real size-$q$ blocks and the robust core searched over \emph{all} fields: $30$ field-level true-brute instances, $6726$ profile-level exhaustive multi-term instances (exercising guard-poison and cross-defeat), and targeted adversarial cases. In every instance the brute-force source answer matches the brute-force atlas robust-core answer ($0$ disagreements), an independent finite sanity check of Lemmas~\ref{l:sem}--\ref{l:wlog} and the promise.

\section{Disclosure (Theorem 3)}
\textsc{min-disclosure}: given an atlas, a fixed branch $k$, nonnegative integer costs on fields (encoded in binary), and a set $W$ of $m$ ``bad worlds'' to exclude, find the minimum-cost disclosure $R$ keeping $k$ standing while excluding all of $W$. With no defeaters the decision version (is there a feasible $R$ of cost $\le\beta$?) is exactly weighted Set Cover, hence NP-complete; the optimization version is therefore NP-hard.
\begin{proposition}
In the defeater-free case, \textsc{min-disclosure} is FPT in $m=|W|$, solvable in $2^{O(m)}\mathrm{poly}(n)$.
\end{proposition}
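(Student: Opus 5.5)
In the defeater-free case, feasibility of a disclosure $R$ has only two parts. The first is $k\subseteq R$; with no defeaters, the anchor $k$ then stands automatically. The second is $R\not\subseteq w_i$ for every $i$, meaning $R$ contains some field $h\notin w_i$. We therefore pay for $k$ upfront and reduce to a weighted covering problem. Let $W'=\{w_i: k\not\subseteq w_i\}$; these bad worlds are already excluded by $k$ itself. If some $w_i\supseteq H$, i.e.\ $w_i=H$, no $R$ can exclude it, so we report infeasibility. Otherwise the remaining task is to choose $S\subseteq H\setminus k$ of minimum cost such that every $w_i\in W\setminus W'$ misses some field of $S$. The optimum is then $c(k)+c(S)$, where $c(k)=\sum_{h\in k}c_h$, and $R=k\cup S$.

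We group fields by their \emph{exclusion pattern}. For $h\in H\setminus k$, let $\chi(h)=\{i: h\notin w_i,\ w_i\in W\setminus W'\}\subseteq[m]$. There are at most $2^m$ distinct patterns. Within each pattern class only the cheapest field matters, because taking a second field with the same pattern adds cost and covers nothing new. Computing all $\chi(h)$ and the per-class minima takes $O(nm)$ time plus arithmetic on the binary-encoded costs, which is polynomial.

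We then run a subset DP over coverage masks. Let $\mathrm{OPT}[U]$ be the minimum cost of a set of pattern representatives whose patterns jointly cover $U\subseteq[m]$. Set $\mathrm{OPT}[\emptyset]=0$. For a nonempty target, the recurrence is
\[
\mathrm{OPT}[U]=\min_{\tau}\bigl(c_\tau+\mathrm{OPT}[U\setminus\chi_\tau]\bigr),
\]
where $\tau$ ranges over pattern classes with $\chi_\tau\cap U\neq\emptyset$. There are $2^m$ states and at most $2^m$ transitions per state, giving $4^m=2^{O(m)}$ steps, each an addition or comparison of polynomial-size integers. The answer is $c(k)+\mathrm{OPT}[\text{indices of }W\setminus W']$, or $+\infty$ if the target mask is not covered by the union of all patterns.

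Correctness is the standard set-cover DP argument. An optimal $S$ may be assumed to contain at most one field per pattern class, and it may be assumed minimal. Ordering its elements then gives a sequence in which each element covers something new, so the recurrence attains the optimum; conversely, every DP value is realized by an actual feasible set. The only step needing care is the reduction itself. It uses that, without defeaters, ``$k$ stands in $R$'' is exactly $k\subseteq R$, and that adding fields never breaks standing; this monotonicity is what fails with general defeaters and is why the proposition is restricted to the defeater-free case. Nonnegativity of the costs guarantees that we never lose by dropping redundant fields.
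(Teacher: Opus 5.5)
Your proposal is correct and follows the paper's proof essentially step for step. You pay for $k$ upfront (in the defeater-free case, keeping $k$ standing requires exactly $k\subseteq R$), discard the worlds already excluded by $k$, group the remaining fields by exclusion pattern keeping the cheapest per class, and run the weighted set-cover subset DP over at most $2^m$ masks. The only differences are cosmetic. You add $c(k)$ at the end instead of seeding $\mathrm{DP}[\emptyset]$ with it. You also restrict transitions to classes with $\chi_\tau\cap U\neq\emptyset$, which keeps the recurrence well-founded even for zero-cost fields, whereas the paper's recurrence would refer to itself in that case.
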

\begin{proof}
Keeping the branch $k$ standing forces $k\subseteq R$, so the disclosure is $R=k\cup R'$ with $R'\subseteq H\setminus k$ free (in the defeater-free case no $d$ can defeat $k$, so any such $R$ keeps $k$ standing). The mandatory part already excludes the worlds $\mathrm{cov}(k)=\{w\in W:k\not\subseteq w\}$ at fixed cost $\mathrm{cost}(k)$. It remains to cover $W_0:=W\setminus\mathrm{cov}(k)$ with fields from $H\setminus k$. Each field $h$ excludes a fixed $\mathrm{cov}(h)\subseteq W$, computable in $\mathrm{poly}(n)$; the objective depends on a chosen field set only through the union of its $\mathrm{cov}(h)$, of which there are $\le 2^{|W_0|}\le 2^m$ distinct values. Group the fields of $H\setminus k$ by $\mathrm{cov}(h)\cap W_0$, keep the cheapest per group, and run the weighted set-cover DP over subsets $S\subseteq W_0$ \emph{seeded by the mandatory cost}: $\mathrm{DP}[\emptyset]=\mathrm{cost}(k)$ and $\mathrm{DP}[S]=\min_{h\in H\setminus k}\big(\mathrm{DP}[S\setminus\mathrm{cov}(h)]+\mathrm{cost}(h)\big)$; the answer is $\mathrm{DP}[W_0]$. This is $2^{O(m)}\mathrm{poly}(n)$.
\end{proof}
With general defeaters the disclosure must also avoid completing any defeater, so coverage is no longer additive over fields; the natural branch-on-worlds algorithm is then $n^{O(m)}$ (XP), and we leave fixed-parameter tractability of the general case open. Theorem~3 is a supporting result; the headline lower bounds (Theorems 1--2) do not depend on it.

\section{Cheap queries (Theorem 4)}
$A_{K,D}(t)=\bigvee_{k\in K}\big[\bigwedge_{h\in k}t_h\wedge\bigwedge_{d\in D}\neg([k\subseteq d]\wedge\bigwedge_{h\in d}t_h)\big]$ with $[k\subseteq d]$ a compile-time constant; this is an unbounded-fan-in depth-$3$ formula of size $O(|K||D|n)$ (no fixpoint). Active reasons $\{k:k\text{ standing at }t\}$: one scan, $O(|K||D|n)$ (precompute $D{\downarrow}t$, test each $k$). Defeater frontier of $k$: scan $\{d\in D:k\subseteq d\}$ and $\subseteq$-minimize, $O(|D|^2n)$ (linear after indexing). Equivalence: a witness of inequivalence is one record, checkable in poly time, so non-equivalence is in NP and equivalence in coNP.

\section{Bridge (Proposition 2)}
If $A$ is monotone then $\Dstar=\emptyset$ (a $d$ with $A(d)=0<A(d\setminus\{h\})$ contradicts monotonicity), the standing condition is vacuous, and $\min_\subseteq\Kstar$ are the $\subseteq$-minimal accepting records, i.e.\ the prime implicants of $A$, i.e.\ the classical \textsc{AXp}{}s. In general, an entry $k$ standing at $t$ satisfies $A\equiv1$ on $[k,t]$ (any defeater $d$ with $k\subseteq d\subseteq s\subseteq t$ would already defeat $k$ at $t$); its minimal extending defeaters form a branch-level contrastive frontier; and a robust core of an accepting $t$ is a minimal sufficient reason stable under all disclosure within $t$. Label-level contrastive explanations coincide with the branch frontier only when the defeated reason has no alternative standing anchor (otherwise the prediction is reinstated).

\section{Experimental details and reproducibility}
\paragraph{Setup.} Decision trees (\textsc{DecisionTreeClassifier}, Gini, \texttt{max\_depth}$=6$) are trained on each original feature binarized to one Boolean (numeric $\ge$ median; categorical $=$ mode); the top-$k$ Booleans by importance are kept and the tree retrained on them, with the accept class oriented so $A(\emptyset)=0$; hence ``accepting'' denotes the oriented target label in the formal audit, not necessarily the semantically favorable class of the original dataset. For Adult we use the first $8{,}000$ OpenML rows before binarization, only to keep the artifact lightweight; the audit itself is over the induced $k$-bit classifier cube, not the raw rows. Note each $k$ re-selects its own top-$k$ Booleans and retrains, so the per-$k$ cubes are separate (not a nested feature sequence). The atlas is compiled exactly by lower-cover analysis over the $2^k$ cube ($k\le12$). Robust cores are computed exactly as minimum hitting sets of the maximal-bad-world complements; ties are broken by cardinality first, then a fixed field order. Online-latency reports use repeated queries ($2000$ for the synthetic compiler demo; $150$ atlas / $120$ baseline per real dataset); all timings are hardware-dependent and only illustrative---the reproducible artifact is the set of correctness assertions and the structural CSV fields, which are deterministic given the fixed seeds.
\paragraph{Variance.} Over $5$ training seeds ($k=10$): median $\sigma=3.2\pm0.4$ (breast-cancer), $2.5\pm0.5$ (credit-g), $4.0\pm0.0$ (adult); max $\sigma=4.8\pm0.4,\,4.0\pm0.0,\,4.0\pm0.0$. A depth control (fixed feature set, \texttt{max\_depth}$\in\{3,4,6,8,12,\infty\}$) finds $\sigma_{\max}$ saturating at $n/2$ already at depth $6$ (identical for unbounded depth), so the small typical core is not an artifact of capping depth.
\paragraph{Adversarial families.} Parity chain $A(s)=r(s)\bmod2$ ($r=$ longest present prefix): robust core $\sigma=n/2$. Vertex-Cover trap (the reduction above on the Johnson greedy-bad bipartite graph): greedy/optimal robust-core ratio $1.00,1.33,1.50$ for $n=4,6,8$.

\paragraph{Reproducibility checklist.}
\begin{itemize}\itemsep0.1em
\item All claims are stated with explicit assumptions (finite $H$; explicit atlas input). Yes.
\item Complete proofs of all theoretical results: in this appendix. Yes.
\item Code, data preparation, and scripts to reproduce every table/number: included as supplementary code; deterministic given fixed seeds. Yes.
\item Datasets: German credit (OpenML id 31), Adult (id 1590), Breast Cancer (scikit-learn builtin); public. Yes.
\item Hyperparameters: none are tuned for predictive performance. \texttt{max\_depth}$=6$ is fixed a priori for interpretability and exact-cube feasibility (with a depth-control grid $\{3,4,6,8,12,\infty\}$ reported); $k\in\{6,8,10,12\}$, each $k$ selecting its own top-$k$ Boolean features by tree importance and retraining. Yes.
\item Compute (reference environment): single CPU (AMD Ryzen~9 9950X, 16~cores/32~threads; one core used), no GPU, $\le 2$\,GB RAM, on Windows~11 (also tested on Linux x86\_64); Python 3.13.5, numpy 2.3.5, scikit-learn 1.8.0, pandas 2.2.3. The whole artifact runs in about $1$--$3$ minutes; per-query microsecond timings are hardware-dependent. Yes.
\item Error bars / variance: reported over $5$ seeds above. Yes.
\end{itemize}

\end{document}